\newcommand{\bu}{{\bf u}}
\newcommand{\bB}{{\bf B}}
\newcommand{\rhat}{\hat{\bf r}}
\newcommand{\bn}{{\boldsymbol \nabla}}
\newcommand{\shat}{\hat{\bf s}}
\newcommand{\zhat}{\hat{\bf z}}
\newcommand{\phihat}{{\hat{\boldsymbol \phi}}}
\newcommand{\curl}{\bn\times}
\newcommand{\bBdot}{{\dot{\bB}}}
\newcommand{\proj}[1]{\overline{#1}}
\newcommand{\pd}[2]{\dfrac{\partial #1}{\partial #2}} 
\renewcommand{\vec}[1]{\bm{#1}} 
\newcommand{\grad}{\vec{\nabla}}
\newcommand{\unitr}{\hat{\vec{r}}}
\newcommand{\laplacian}{{\nabla^2}}
\newtheorem*{theorem}{Theorem}
\title{Three-dimensional solutions for the geostrophic flow in the Earth's core}
\author[1]{Colin M. Hardy}
\author[2]{Philip W. Livermore}
\author[3]{Jitse Niesen}
\author[4]{Jiawen Luo}
\author[4]{Kuan Li}
\affil[1]{EPSRC Centre for Doctoral Training in Fluid Dynamics, University of Leeds, Leeds, LS2 9JT, UK}
\affil[2]{School of Earth and Environment, University of Leeds, Leeds, LS2 9JT, UK}
\affil[3]{School of Mathematics, University of Leeds, Leeds, LS2 9JT, UK}
\affil[4]{Institut f\"ur Geophysik, ETH Zurich, Sonneggstrasse 5, 8092 Z\"urich, Switzerland}
\begin{document}
\maketitle

{\color{red}
}

\abstract
In his seminal work, \cite{Taylor_63} argued that the geophysically relevant limit for dynamo action within the outer core is one of negligibly small inertia and viscosity in the magnetohydrodynamic equations. Within this limit, he showed the existence of a necessary condition, now well known as Taylor's constraint, which requires that the cylindrically-averaged Lorentz torque must everywhere vanish; magnetic fields that satisfy this condition are termed Taylor states. Taylor further showed that the requirement of this constraint being continuously satisfied through time prescribes the evolution of the geostrophic flow, the cylindrically-averaged azimuthal flow. 
We show that Taylor's original prescription for the geostrophic flow, as satisfying a given second order ordinary differential equation, is only valid for a small subset of Taylor states. An incomplete treatment of the boundary conditions renders his equation generally incorrect. 

Here, by taking proper account of the boundaries, we describe a generalisation of Taylor's method that enables correct evaluation of the instantaneous geostrophic flow for any 3D Taylor state. We present the first full-sphere examples of geostrophic flows driven by non-axisymmetric Taylor states. Although in axisymmetry the geostrophic flow admits a mild logarithmic singularity on the rotation axis, in the fully 3D case we show that this is absent and indeed the geostrophic flow appears to be everywhere regular.



\section{Introduction}

Earth's magnetic field is generated by a self-excited dynamo process through the flow of electrically-conducting fluid in the outer core. Although the set of equations that govern this process are known, their numerical solution is challenging because of the extreme dynamical conditions \citep{Roberts_King_2013}.   
Of particular note is the extreme smallness of the core's estimated viscosity, and the large disparity between the daily timescale associated with Earth's rotation and the thousand-year timescale that governs the long-term geomagnetic evolution. Represented in terms of non-dimensional numbers, this means that
%
the Rossby number (also known as the magnetic Ekman number, $E_\eta$, measuring the ratio of rotational to magnetic timescales) is ${R_o = O(10^{-9}})$ and the Ekman number (measuring the ratio of rotational to viscous effects) is ${E=O(10^{-15})}$. The smallness of these parameters means that rapid (sub-year) timescales associated with inertial effects (e.g. torsional waves) and extremely thin boundary layers (of depth about 1~m) must be resolved in any Earth-like numerical model, even though neither likely plays an important role in the long term evolution of the geodynamo. 

Over the past decades, modellers of the long term geomagnetic field have followed one of two largely independent strategies in order to circumvent these problems.
First, beginning with the work of \cite{Glatzmaier_Roberts_95a} and \cite{Kageyama_etal_95}, it was noted that by artificially increasing these two parameters by many orders of magnitude to now typical values of ${R_o = 10^{-3}}$, ${E = 10^{-7}}$ \citep{Christensen_2015}, the numerically difficult rapid timescales and short length scales are smoothed, allowing larger time steps, and therefore ultimately permitting a longer time period to be studied for a given finite computer resource. Although such (now mainstream) models can reproduce many characteristics of Earth's geomagnetic field, several studies have cast doubt as to whether they obey the correct force balance within the core \citep{soderlund2012influence,king2013flow,Roberts_King_2013}, although some evidence points to models being on the cusp of faithfully representing Earth's core \citep{yadav2016approaching,schaeffer2017turbulent}.


In the second strategy, which we consider here in this paper, the values of $R_o$ and $E$ are both set to zero \citep{Taylor_63}. By entirely neglecting inertia and viscosity, the challenging aspects of rapid timescales and very short viscous lengthscales are removed and this approximation will likely lead to a computationally less demanding set of equations to solve. The resulting dimensionless magnetostrophic regime then involves an exact balance between the Coriolis force, pressure, buoyancy and the Lorentz force associated with the magnetic field $\bB$ itself:
\begin{equation}\vec{\hat z} \times \vec{u} = -\grad p + F_B\vec{\hat r} + \curl \vec{B} \times \vec{B}, \label{eqn:magneto}
\end{equation}
where $F_B$ is a buoyancy term that acts in the unit radial direction $\vec{\hat r}$ and $\vec{\hat z}$ is the unit vector parallel to the rotation axis \citep{Fearn_98}. 
In a full sphere (neglecting the solid inner core), 
a complete description of the geodynamo requires a solution of \eqref{eqn:magneto} alongside equations describing the evolution of $\bf B$ and $F_B$ within the core, whose boundary conditions derive from the surrounding electrically-insulating impenetrable overlying mantle.
Denoting $(s,\phi,z)$ as cylindrical coordinates, \citet{Taylor_63} showed that, as a consequence of this magnetostrophic balance, the magnetic field must obey at all times $t$ the well-known condition
\begin{equation} 
T(s,t) \equiv \int_{C(s)} ([\curl \bB] \times \bB)_\phi s d\phi dz =0,\label{eqn:Taylor} 
\end{equation}
for any geostrophic cylinder $C(s)$ of radius $s$  coaxial with the rotation axis.

Taylor also showed that it is expedient to partition the magnetostrophic flow of \eqref{eqn:magneto}, using a cylindrical average, into geostrophic and ageostrophic parts:
$$ {\bf u} =  u_g(s) \phihat + {\bf u}_{a}(s,\phi,z),$$
in which the ageostrophic flow ${\bf u}_{a}$ has an azimuthal component with zero cylindrical average. 
Provided equation \eqref{eqn:Taylor} is satisfied, equation \eqref{eqn:magneto} can be used to find ${\bf u}_{a}$ directly (for example, by using Taylor's constructive method or the integral method of \citet{Roberts_King_2013}), although the geostrophic flow remains formally unspecified by \eqref{eqn:magneto}. As Taylor further showed however, the geostrophic flow can be constrained by insisting that \cref{eqn:Taylor} is not just satisfied instantaneously but for all time. The task of $u_g$ is then to keep the magnetic field on the manifold of Taylor states \citep{Livermore_etal_2011}. It is noteworthy that, in such a model, at all times the flow is slaved to $\bf B$ and $F_B$.

In his 1963 paper, Taylor showed that (for a fully 3D system) the geostrophic flow was at every instant the solution of a certain second order differential equation (ODE) whose coefficients depend on $\bB$ and $F_B$. His elegant and succinct analysis has been reproduced many times in the literature. It may then come at some surprise that in the intervening five decades there have been no published implementations of his method (that the authors are aware of).
Very likely, this is due to a subtle issue concerning the treatment of the magnetic boundary conditions. As we shall show, rather than being applicable to a general (Taylor state) $\bB$, Taylor's method is only valid for a small subset of Taylor states. Of crucial importance is that this subset does not include those states likely to be realised in any analytical example or in any practical numerical scheme to solve the magnetostrophic equations. 
The main goal of this paper is to describe why this happens, and to modify Taylor's method in order that it can apply more generally. 



Despite the lack of headway using a direct application of Taylor's ODE, some alternative methods to evolve the magnetostrophic equation have shown success. 
By treating a version of the Taylor integral \eqref{eqn:Taylor} that is specific to axisymmetry \citep{Braginsky_70b, Jault_95}, \cite{Wu_Roberts_2015} demonstrated that they could evolve the magnetostrophic system by solving a first order differential equation for the geostrophic flow, rendering the Taylor integral zero to first order, and went on to apply it to a variety of examples.  
In an independent line of investigation \cite{Li_etal_2018} showed that, by using control theory, it is possible to find $u_g$ implicitly such that the Taylor integral is zero at the end of any finite timestep. As we show later explicitly by example, their method is fundamentally 3D,  although in their paper they only applied it to the axisymmetric case.
The generalised version of Taylor's method that we present in this paper is also fully 3D and provides an alternative means to that of \citet{Li_etal_2018} of calculating the geostrophic flow. Either of these methods may provide a route to create a fully 3D magnetostrophic alternative to the mainstream numerical models with weak viscosity and inertia. We note however, that the methods we describe within this paper are restricted to the full sphere, we do not attempt to incorporate the inner core or any of its dynamical effects.


An alternative route to finding a magnetostrophic dynamo is to reinstate viscosity and/or inertia and investigate the limit as both $E$ and $R_o$ become small \citep{Jault_95}. Arguably this would result in models closer to geophysical reality than those that are purely magnetostrophic as this is precisely the regime of the Earth's core. A variety of studies reported evidence of behaviour independent of $E$ in the inviscid Taylor-state limit, either from a direct solution \citep{Fearn_Rahman_2004}, or from solving the equations assuming asymptotically small $E$ \citep{Hollerbach_Ierley_91, Soward_Jones_83}. To date, all models of this type have been axisymmetric and there have been no attempts at a general 3D implementation of these ideas. One difficulty with treating asymptotically-small $E$ is that the resulting equation for $u_g$ is an extremely delicate ratio of two small terms, whose form is dependent on the specific choice of mechanical boundary conditions \citep{Livermore_etal_2016a}. The convergence of magnetostrophic and asymptotically low-$E$ models remains an outstanding question.

The remainder of this paper is structured as follows.
Before we can explain why Taylor's method of determining the geostrophic flow fails in general, we need to set out some general background and review other alternative schemes: this is accomplished in sections 2--5. In section 6 we discuss the importance of a key boundary term and why it restricts the validity of Taylor's method; we then show explicitly in a simple case that Taylor's method fails. 
In sections 8--10 we generalise Taylor's method and give some examples, discussing the existence of weak singularities in section 11; we end with a discussion in section 12.

\section{General considerations}
\label{sec:general_considerations}

\subsection{Non-dimensionalisation}
\label{sec:scaling}


In the non-dimensionalisation considered in this paper, length is scaled by $L$, the outer core radius $3.5 \times 10^6$~m, time by $\tau$, the ohmic diffusion time $\tau$ (250--540~kyr) \citep{Davies_etal_2015}, and speed by ${\mathcal U} = L\tau^{-1}\approx 5 \times 10^{-7}$. The scale used for the magnetic field is ${\mathcal B} = (2\Omega_0 \mu_0 \rho_0 \eta)^\frac{1}{2}$ \citep{Fearn_98}, where for Earth the physical parameters take the following values: angular velocity $\Omega_0=7.272 \times 10^{-5}$~s, permeability $\mu_0=4\pi \times 10^{-7}~NA^{-2}$, density $\rho_0=10^{4}~$kg m$^{-3}$ and magnetic diffusivity $\eta=0.6$--$1.6$~m$^2$s$^{-1}$. These parameters lead to the non-dimensional parameters $R_o = \eta/(2\Omega L^2)\approx 10^{-9}$ and $E = \nu/(2\Omega L^2) \approx 10^{-15}$, whose small values motivate neglecting the terms they multiply.

The value of ${\mathcal B} \approx 1.7$~mT is close to the estimate of the geomagnetic field strength of \citet{Gillet_etal_2010}, and so we use dimensionless magnetic fields with toroidal or poloidal components of rms (root mean squared) strength of unity. This corresponds to a dimensional rms magnitude of 1.7~mT for purely toroidal or purely poloidal fields and $1.7\sqrt{2} \approx 2.4$~mT for mixed states. Using $\mathcal U$, this choice enables the immediate interpretation of the dimensional scale of any flow that we show.


\subsection{Magnetic field representation and the initial state}
In our full sphere of unit radius, the position $\vec{r}$ is naturally described in spherical coordinates  $(r,\theta,\phi)$, although the importance of the rotation axis also leads us to use cylindrical coordinates $(s,\phi,z)$. The magnetic field ${\bf B}$ can be written using a toroidal (T)-poloidal (S) framework
$$ \bB = \curl\curl \mathcal{S}\vec{\hat r} + \curl \mathcal{T} \vec{\hat r},$$
with $S$ and $T$ expanded as
$$ \mathcal{S} = \sum_{l,m} \mathcal{S}_l^m(r) Y_l^m(\theta,\phi), \qquad \mathcal{T} = \sum_{l,m} \mathcal{T}_l^m(r) Y_l^m(\theta,\phi),$$
where $Y_l^m$ is a spherical harmonic of degree $l$ and order $m$. The functions $\mathcal{S}$ and $\mathcal{T}$ 
must be chosen to satisfy both Taylor's condition \eqref{eqn:Taylor}, along with the electrically insulating boundary conditions at $r=1$ that
can be written
\begin{equation} \frac{d \mathcal{S}_l^m}{dr} + l \mathcal{S}_l^m = \mathcal{T}_l^m = 0. \label{eqn:bc}
\end{equation}
The fluid is assumed to be incompressible and hence the flow $\bu$ can also be written in a comparable form, and due to the absence of viscosity only satisfies an impenetrability condition: $u_r = 0$ on $r=1$. We cannot impose no-slip or stress-free conditions, there being no boundary layer to accommodate any adjustment from the free-stream inviscid structure. 

All time-dependent magnetostrophic models, axisymmetric or 3D, require an initial state from which the system evolves. Because the flow is defined completely by the magnetic field and $F_B$, only the initial structure of the magnetic field ${\bf B}(0)$ and $F_B(0)$ are needed: there is no need to specify the initial flow. 
%
A general scheme for finding an exact initial Taylor state using a poloidal-toroidal representation was described in \citet{Livermore_etal_2008}; in general it requires a highly specialised magnetic field to render its integrated azimuthal Lorentz force zero over all geostrophic cylinders. However, in a full sphere such cancellation can be achieved in a simple way by exploiting reflectional symmetry in the equator \citep{Livermore_etal_2009a}. Using the Galerkin basis of single-spherical-harmonic modes that satisfy the boundary conditions (see \Cref{sec:Apb}), suitable simple modal expansions are automatically Taylor states. 

\subsection{Overview of time evolution}

Because of the absence of inertia, at each instant the magnetostrophic flow is entirely determined by $\bf B$ and $F_B$ from \cref{eqn:magneto}: therefore the system, as a whole, only evolves through time-evolution of the quantities $F_B$ and $\bf B$. The evolution of $F_B$ is assumed to be tractable and lies outside the scope of this study: for simplicity we shall henceforth assume that $F_B = 0$, although we note that all the methods nevertheless apply in the case of non-zero $F_B$. The evolution of the magnetic field is described by the induction equation:
\begin{equation} \label{eqn:induct} \partial_t {\bB}(\vec{r},t) = {\mathcal I}(\bB, \bu) \equiv \curl\big[\bu\times \bB(\vec{r},t) \big] + \eta \nabla^2 \bB(\vec{r},t)
\end{equation}
where $\eta\ne 0$ is the magnetic diffusivity (assumed constant) and $\partial_t = \partial/\partial t$.
Assuming that we can evolve $\bB$ and $F_B$ (using standard methods), the major outstanding task is then to determine the flow at any instant given $\bf B$ and $F_B$. 

The ageostrophic component of the flow, containing all the (possibly complex) axially asymmetric structure turns out to be straight-forward to calculate, as it can be determined either through the integral method of \citet{Roberts_King_2013}, the constructive method of \citet{Taylor_63} or the spectral method as described in \Cref{sec:u_a_method}. By contrast, the more elementary  geostrophic flow, depending only on $s$, is surprisingly difficult to compute, owing to its key role of maintaining Taylor's constraint. 

There are two ways in which the geostrophic flow may be found, which differ in philosophy. In the first, we may undertake an instantaneous analysis to find
the geostrophic flow that gives zero rate of change of Taylor's constraint: $\partial_t T(s,t) = 0$ \citep{Taylor_63}. Because of the resulting closed-form analytic description, such methods can be useful in computing snapshot solutions that elucidate the mathematical structure of the geostrophic flow, for example, the presence of any singularities.
However, as a practical time-evolution tool, their utility is not so obvious. For example, the simple explicit time-evolution scheme, defined by assuming an instantaneous solution is constant over a finite time interval, would lead to a rapid divergence from the Taylor manifold \citep[see][for an example]{Livermore_etal_2011}. 

In the second type of method, we may consider taking a time step (of size $h$), determining the geostrophic flow implicitly by the condition that the magnetic field $\bB(t+h)$ satisfies Taylor's constraint \citep{Li_etal_2018,Wu_Roberts_2015}. 
%
%
In general, implicit and instantaneous methods methods will only produce the same geostrophic flow in a steady state, or for a time-dependent state for infinitesimally small $h$. 


All methods to determine the geostrophic flow do so up to an arbitrary solid body rotation: $u_g=as$. The constant $a$ can be found through requiring zero global angular momentum
\begin{equation} \label{eqn:angmomcon} \int_0^1\int_{-Z_T}^{Z_T}\int_0^{2\pi} s (\vec{u}_{a} \cdot \phihat+u_g) ~d\phi dz s ds = 0, \end{equation}
where $Z_T = \sqrt{1-s^2}$ is the half-height of $C(s)$. We also assume the geostrophic flow is everywhere finite, which is implemented by additional conditions where necessary.

\section{Braginsky's formulation}
\label{sec:Brag}
Before discussing the determination of the geostrophic flow in more detail, we briefly review a crucial alternative formulation of Taylor's constraint due to \cite{Braginsky_70b}, which laid the foundations of many subsequent works on the subject \citep[e.g.][]{Roberts_Aurnou_2011,Wu_Roberts_2015,Braginsky_75,fearn1992magnetostrophic,Book_Jault_2003}. As an identity the Taylor integral \eqref{eqn:Taylor}
can be equivalently written
\begin{equation} T(s,t) =  \frac{1}{s} \frac{\partial}{\partial s} \bigg[ s^2 \int_{C(s)} B_\phi B_s d\phi dz \bigg] + \frac{s}{\sqrt{1-s^2}} \oint_{N+S} (B_\phi B_r) d\phi,\label{eqn:manipulation1} \end{equation}
where $N$ and $S$ are the northern and southern end caps of the cylinder $C(s)$ at the intersection with the spherical boundary at $r=1$. 

It is also useful to consider the net magnetic torque on all fluid enclosed within $C(s)$, $\Gamma_z$, defined by 
$$T(s,t) = \frac{1}{s}\, \frac{\partial \Gamma_z}{\partial s} \qquad\text{or} \qquad \Gamma_z(s,t) = \int_0^s s' T(s',t) ds'. $$
In our full-sphere geometry, it is clear that $\Gamma_z(s,t)$ is zero if and only if $T(s,t)$ is zero, although in a spherical shell it is possible that a piecewise (non-zero) solution exists for $\Gamma_z$. 
The condition $\Gamma_z=0$ defines what we refer to as the Braginsky constraint:
\begin{equation} 0=\Gamma_z \equiv  s^2 \int_{C(s)} B_\phi B_s d\phi dz  + \int_0^s \oint_{N+S} \frac{{s'}^2\,B_\phi B_r}{\sqrt{1-{s'}^2}}  d\phi\,ds',\label{eqn:Taylor_alternative} \end{equation}
which is equivalent to Taylor's constraint, and simplifies for specific classes of magnetic fields that cause the boundary term to vanish. 
One such class relates to magnetic fields with no radial component on $r=1$ (e.g. toroidal fields), a
further class is that whose fields have a vanishing azimuthal component on $r=1$ (e.g. axisymmetric fields).

It is important to note the significant difference in the mathematical structure between the constraints of Braginsky \eqref{eqn:Taylor_alternative} and Taylor \eqref{eqn:Taylor}. In \eqref{eqn:Taylor_alternative} there is a clear partition between the two surface integral terms on the right hand side: the first term is an integral defined over $C(s)$ that is independent of the magnetic field values on the end caps 
(these being a set of measure zero); the second end-cap term depends only on the boundary values of the magnetic field. By contrast, although ostensibly Taylor's integral \eqref{eqn:Taylor} is an integral over the surface $C(s)$, the integrand involves a spatial derivative (the curl of $\bf B$) leading to a dependence on the boundary values of the magnetic field. As we will see later, this hidden dependence on the boundary conditions has a deep consequence on Taylor's method for determining the geostrophic flow.

\section{Existing methods to determine the geostrophic flow}
Our modification of Taylor's method described in \cref{sec:Altmeth} determines the instantaneous geostrophic flow in a fully 3D geometry. In this section, we briefly review other methods available to calculate the geostrophic flow whose working assumptions are different: either they are axisymmetric, or designed to take a finite time step and are not instantaneous. Where there is overlap in applicability, we will use these methods to numerically confirm our solutions.

\subsection{An axisymmetric first-order implicit method}

As noted above, under axisymmetry Braginsky's condition collapses to 

\begin{equation} \Gamma_z = 2\pi s^2 \int_{-Z_T}^{Z_T} B_\phi B_s dz =0. \label{eqn:TC_axi} \end{equation}
This simple form was exploited by \citet{Wu_Roberts_2015} who considered taking a single timestep of duration $h$, after which they required 
\begin{equation} 
 \Gamma_z(s,t)+h\pd{\Gamma_z(s,t)}{t} =0. \label{eqn:first_order}
\end{equation}
The left hand side here approximates $\Gamma_z(s,t+h)$, so this ensures that \eqref{eqn:TC_axi} is satisfied to first order.
To find an equation for the geostrophic flow they differentiated \cref{eqn:TC_axi} with respect to time and used the fact that the geostrophic term in the induction equation reduces to
\begin{equation} \label{eqn:simpinduct} \curl(u_g(s) \phihat  \times\bB) = s B_s \frac{\partial (u_g/s)}{\partial s} \phihat.
\end{equation}
They obtained the following first order ordinary differential equation describing the geostrophic flow
\begin{equation} s \alpha_0(s)   \frac{d}{ds}\left(\frac{u_g(s)}{s} \right) = - S_0(s) - \frac{\Gamma_z(s,t)}{h}, \label{eqn:1ODE} \end{equation}
where \[  S_0(s) = 2\pi s^2 \int_{-Z_T}^{Z_T} ( B_s {C}^{a}_\phi + B_\phi {C}^{a}_s)\, dz,\qquad \alpha_0(s) = 2\pi s^2 \int_{-Z_T}^{Z_T} B_s^2\, dz,\]
and 
\begin{equation} \vec{C}^{a}=\curl(\bu_{a} \times \bB) + \eta \nabla^2 \bB. \label{eqn:Cdef} \end{equation} 
The subscripts of zero denote a restriction to axisymmetry of (more general) 3D quantities that are defined subsequently. \citet{Wu_Roberts_2015} implemented this method by solving \cref{eqn:1ODE} using a finite difference scheme. It is worth remarking that this scheme allows small numerical deviations from a Taylor state (since \eqref{eqn:first_order} is only approximate).
Because the method depends upon \eqref{eqn:TC_axi} which is tied to axisymmetry, their method is not extendable to 3D.

\subsection{A 3D fully implicit scheme} \label{sec:3D_implicit}
An alternative implicit scheme proposed by \citet{Li_etal_2018}, was to seek a geostrophic flow that ensured Taylor's constraint is satisfied (without error) in a numerical scheme after taking a single timestep $h$. By extending to multiple timesteps, this method is suitable to describe fully 3D time-dependent dynamics. Although the authors only demonstrated its utility on axisymmetric examples, in this paper we will show how the method simply extends to 3D with a single short time-step.

The key idea is to minimise (hopefully to zero) the target function
\begin{equation} \Phi = \int_0^1 T^2(s,t+h) ds \label{eqn:Li_1} \end{equation}
by optimising over all possible choices of $u_g$, assumed constant throughout the interval $0 \le t \le h$. Although \cite{Li_etal_2018} set out a sophisticated algorithm to do this in general based on control theory, here we describe a simplification of the method which is suitable for $h \ll 1$, which we can use to benchmark our instantaneous solutions of the generalised 3D Taylor methodology.

Like \cite{Li_etal_2018} we adopt a modal expansion of $u_g$, of which a general form is 
\begin{equation} 
\label{eqn:specexp} u_g = s\sum_{i=0}^I A_i T_i(2s^2-1)+Bs\ln(s) \end{equation}
where $T_i(2s^2-1)$ are even Chebyshev polynomials of the first kind, and we allow a weak logarithmic singularity at the origin as required by our analytic results in \cref{sec:Tayfail}; see also \cref{sec:sing}. 

Because we plan to take only a single time step of size $h\ll 1$, we adopt a very simple first order explicit Euler time evolution scheme 
$$\bB(t+h)=\bB(t)+h\, \partial_t \bB(t)$$
which is then substituted into \eqref{eqn:Li_1}. For simplicity we assume that the ageostrophic flow, calculated at $t=0$, is also constant over the time-step. 
As a representation of the magnetic field (and its rate of change), we use a Galerkin scheme (see \Cref{sec:Apb}), which satisfies the boundary conditions \eqref{eqn:bc} automatically. Practically, this means that we use $\proj{\mathcal I}$ (see \cref{eqn:induct}) in place of $\partial_t \bB$, where the overbar denotes the projection onto the Galerkin basis. The coefficients $A_i$ and $B$ are then found through minimising $\Phi$. We note that since $\bB(t+h)$ is formally linear in $u_g(s)$, $T(s,t+h)$ is then quadratic and hence $\Phi$ quartic in the coefficients $A_i$ and $B$. \citet{Li_etal_2018} found the minimum using an iterative scheme, although we note that, in general (and without a good starting approximation), finding such a minimum may be problematic.
%

It is noteworthy, however, that in the axisymmetric case this analysis is greatly simplified. Through \cref{eqn:simpinduct} only the azimuthal component of $\bB(t+h)$ depends on $u_g$, and \cref{eqn:manipulation1} shows that $T(s)$ is now linear and $\Phi$ quadratic in $u_g$, hence finding the minimum of $\Phi$ is more straightforward.




\subsection{An instantaneous axisymmetric method}

\citet{Wu_Roberts_2015} also presented a method for finding an instantaneous solution for the geostrophic flow in axisymmetry.  Through differentiating with respect to time \cref{eqn:TC_axi} they arrive at the following first order ODE, here referred to as the BWR (Braginsky-Wu-Roberts) equation:
\begin{equation} {\cal L}_{BWR} \equiv s \alpha_0(s)   \frac{d}{ds}\left(\frac{u_g(s)}{s} \right) = - S_0(s), \label{eqn:Wuinstant} \end{equation}
which is the same as \eqref{eqn:1ODE} without the final term. This gives $u_g(s)$ explicitly as
\begin{equation} u_g(s) = -s \int_0^s \frac{S_0(s')}{s' \alpha_0(s')}~ \text{d}s'. \label{eqn:BWR_exact} \end{equation}
In all the cases we consider, \eqref{eqn:BWR_exact} can be solved analytically (with the assistance of computer algebra).
%
A further property of this equation is that, for a purely-poloidal axisymmetric magnetic field, the solution $u_g$ is independent of the magnetic diffusivity $\eta$. This is because $\laplacian \bB$ is also purely-poloidal and a purely-poloidal field has no azimuthal component. Thus 
$$ B_s(\laplacian \vec{B})_\phi = B_\phi (\laplacian \vec{B})_s = 0$$
and the diffusion term (within ${S}_0$) then never appears in \eqref{eqn:Wuinstant}.
This differs from the case of a more general magnetic field with both toroidal and poloidal components, where $u_g$ depends upon $\eta$.

We also observe that for an axisymmetric purely-toroidal field, since $B_s = 0$ everywhere \cref{eqn:Wuinstant} is null because $\alpha_0 = S_0=0$ reducing it to the tautology $0=0$ and hence placing no constraint on the geostrophic flow.




%
%
%
\subsection{Taylor's 3D instantaneous method}
\label{sec:Taylor_method}
We end this section by discussing the well known (instantaneous) method of Taylor, who determined the unknown geostrophic flow by differentiating with respect to time (denoted by the over-dot shorthand) the Taylor integral in \cref{eqn:Taylor} to produce:
\begin{equation} 0 = \int_{C(s)} \big\{ [\curl\bBdot]\times\bB  +  [\curl\bB]\times\bBdot \big\}_\phi\;s\,d\phi\,dz. \label{eqn:Taylor_ddt} \end{equation}
On substituting directly for $\bBdot$ from  \cref{eqn:induct}
in addition to its curl (describing $\curl \bBdot$), Taylor showed that for fully 3D Taylor states $\bB$ the resulting equation for the geostrophic flow can be written in a remarkably succinct form as the second order ordinary differential equation
\begin{equation}
{\cal L}_T(u_g) \equiv \alpha(s) \frac{d^2}{ds^2}\left(\frac{u_g(s)}{s} \right) + \beta(s) \frac{d}{ds}\left(\frac{u_g(s)}{s} \right) = G(s). \label{eqn:2ODE}
\end{equation}
In the above, the coefficients are
\begin{equation} \label{eqn:alpha}
\alpha(s) = \int_{C(s)} s^2\, B_s^2\, d\phi\, dz, \qquad
\beta(s) = \int_{C(s)} \left[ 2B_s^2 + s\, \bB\cdot \bn B_s\right]\, s\, d\phi\, dz, 
\end{equation}
and $G(s)$ is a function describing the interaction of $\vec{u}_{a}$ and the magnetic field defined as
$$G(s) =  - \frac{1}{s} \frac{\partial}{\partial s} \bigg[ s^2 \int_{C(s)} \vec{C}^{a}_\phi B_s + \vec{C}^{a}_s \, B_\phi d\phi dz \bigg].$$
Note the mistake in \cite{Taylor_63} where a factor of $s$ is omitted within the coefficient $\beta$.
The functions $\alpha_0$ and $S_0$, previously defined, are simply axisymmetric variants of $\alpha$ given above and $S(s)$ defined as
$$S(s) =   s^2 \int_{C(s)} ({C}^{a}_\phi B_s + {C}^{a}_s \, B_\phi) d\phi dz  + \int_0^s s' \bigg[ \frac{s'}{\sqrt{1-{s'}^2}} \oint_{N+S} ({B_\phi} {C}^{a}_r + {B_r} {C}^{a}_\phi )d\phi \bigg] ds',$$
where $\vec{C}^a$ is as defined in \cref{eqn:Cdef}.
The fact that the coefficients $\alpha(s)$ and $\beta(s)$ are spatially dependent means that analytic solutions to \eqref{eqn:2ODE} are very rare and in general only  numerical solutions are possible. 
Of crucial note is that the boundary conditions played no part in the derivation above. 

\section{Technical aside: higher order boundary conditions}
\label{sec:cty}

\subsection{Higher order boundary conditions in the heat equation}

Taylor's method is based on the instantaneous evolution (which we can take to be at time $t=0$) of the magnetostrophic system whose magnetic field is prescribed and must satisfy Taylor's constraint. Here we  discuss higher order boundary conditions, the importance of which has so far been overlooked. We start by introducing this concept in a simple PDE, then we discuss the relevance for Taylor's equation.

Suppose we are interested in finding $f(x,t)$ on $x \in [0,1]$, whose evolution is described by the heat equation in the interior of the domain
\[ \frac{\partial f}{\partial t} = \frac{\partial^2 f}{\partial x^2},\]
to be solved with the
boundary conditions $f(0,t) = f(1,t) = 0$. 
For this simple equation, the general solution can be written in the form
\[ f(x,t) = \sum_n A_n e^{-n^2 \pi^2\, t}\, \sin(n\,\pi\,x).\]
Let us now suppose we have an initial state:
\[ f(x,0) = x^2(1-x) \]
which satisfies the boundary conditions. Its future evolution would be given by the projection onto the normal modes as above.

In Taylor's analysis, part of the integral in \eqref{eqn:Taylor} could be converted to a boundary term. Here we consider an analogy which is exactly integrable:
\begin{equation} \frac{d}{dt} \int_0^1 \frac{\partial f}{\partial x} \, dx = \frac{d}{dt} [f(1)-f(0)] = 0  \label{eqn:simple_zero} \end{equation}
using the boundary conditions.
In Taylor's derivation, he differentiated under the integral sign and substituted directly for $\partial f/\partial t$, in order to find the equation that $u_g$ must satisfy using an instantaneous initial magnetic field. In our example, this produces
\begin{equation}
\label{taylor-wave}
\frac{d}{dt} \int_0^1 \frac{\partial f}{\partial x} \, dx = \int_0^1 \frac{\partial^2 f}{\partial x \partial t}\, dx = \int  \frac{\partial^3 f}{\partial x^3}dx = [f_{xx}(1,t)-f_{xx}(0,t)].
\end{equation}
At $t=0$, we evaluate the above expression as $-6$ (note that $f_{xxx}(x,0) = -6$)
resulting in an apparent contradiction with \eqref{eqn:simple_zero} and illustrating that this approach is not generally valid. 

The problem arises because the initial state does not satisfy the condition $f_{xx}(0,t) = f_{xx}(1,t) = 0$, which arises from differentiating $f(0,t) = f(1,t) = 0$ with respect to time and substituting the PDE. The condition $f_{xx}(0,t) = f_{xx}(1,t) = 0$ is called the first-order boundary condition \citep{Book_Evans_2010}. The consequence of the initial state not satisfying the first-order boundary condition is that the solution is not smooth at the boundary at $t=0$. Specifically, the derivatives in~\eqref{taylor-wave} do not exist and thus the above derivation is not valid. As a simple illustration of the issue, note that the general solution implies that $f_{xxx}(x,0) = -\sum_n n^3 \pi^3 A_n \cos(n \pi x)$, which cannot represent the constant function $f_{xxx}(x,0) = -6$ associated with the initial state. This lack of smoothness only occurs at the initial time $t=0$. At any later time ($t>0$), the solution is infinitely smooth; this is the smoothing property of the heat equation.

In the very special case that the initial state satisfies the first order boundary conditions (e.g. $f(x,0) = x^3(1-x)^3$) then there is no contradiction and \eqref{eqn:simple_zero} and \eqref{taylor-wave} are consistent. However, for a general initial condition, the procedure adopted is not valid. 


\subsection{The relevance for Taylor's equation}

We now discuss the relevance of the above discussion of higher-order boundary conditions in the context of the Earth's magnetic field.
In the derivation of Taylor's second-order ODE \eqref{eqn:2ODE}, it is implicitly assumed that $\bB$ and all its time derivatives are (initially) smooth everywhere. Although it is somewhat hidden in Taylor's original derivation, taking the time-derivative of the equivalent form of \eqref{eqn:manipulation1} makes this explicit:
\begin{equation} \frac{1}{s} \frac{\partial}{\partial s} \bigg[ s^2 \int_{C(s)} (\dot{B_\phi} B_s + B_\phi \dot{B_s}) d\phi dz \bigg] + \frac{s}{\sqrt{1-s^2}} \oint_{N+S} (\dot{B}_\phi B_r + B_\phi \dot{B_r}) d\phi=0.\label{eqn:alter1} \end{equation}
Taylor substituted everywhere the induction equation~\eqref{eqn:induct}, $\partial_t \bB = {\mathcal I}(\bu,\bB)$, but in view of the above discussion, we need to take care, particularly for the boundary terms.

We appeal to a reduced version of the magnetostrophic equations in order to probe what can be said about the behaviour of $\bB(t)$ on the boundary at $t=0$. 
Assuming that $\bu(t)$ is given and is independent of $\bB$, the induction equation~\eqref{eqn:induct} is of standard parabolic form (like the heat equation), so its solution is smooth for all $t>0$. If the initial condition $\bB(0)$ is also smooth and satisfies the boundary condition~\eqref{eqn:bc}, then the solution is smooth also at $t=0$, except possibly at $r=1$. For the solution to be smooth everywhere, including at $r=1$, and for Taylor's substitution to be valid, we need the initial condition to satisfy not only the usual boundary condition (also termed the zero order boundary conditions) but also the first order boundary conditions: that $\partial_t \bB$, given by ${\mathcal I}(\bB,\bu)$ of \eqref{eqn:induct} satisfies the boundary condition \eqref{eqn:bc}. Higher-order variants of the boundary conditions pertain to higher-order time derivatives. Assuming that this analysis extends to the full magnetostrophic equations, it provides strong constraints on the form of the initial condition that produces a solution that is smooth for $t \ge 0$ and all $r \ge 0$.


This issue of lack of smoothness of $B_\phi$ occurs only instantaneously at $t=0$. One may ask if it is possible to specify an initial field that satisfies Taylor's constraint and higher order boundary conditions, making it possible to use \cref{eqn:2ODE} directly. Although in principle the answer is yes, it would be practically impossible because an evaluation of the first order boundary condition requires knowledge of $\partial_t \bB$ and therefore $u_g$. The logic is therefore circular: we need to know $u_g$ in order to check the method that enables us to find $u_g$ in the first place. 
It would seem that some additional insight or good fortune would be required to find a geostrophic flow that is self-consistently satisfies the boundary conditions.
The complication compounds the already difficult task of finding an initial condition that satisfies the necessary condition of being a Taylor state.

It is worth noting, However, that once the system has evolved past the initial condition many of these problems vanish.
For $t>0$, solutions to parabolic systems are smooth and so automatically satisfy all higher order boundary conditions. It follows that \cref{eqn:2ODE} is valid for $t>0$, although this does not help find the geostrophic flow at $t=0$. 

 
\subsection{Schemes in which the boundary information is included}
These concerns described above regarding boundary conditions do not carry over to the axisymmetric case, the plane layer situation nor the 3D implicit schemes described.
In the axisymmetric and Cartesian cases \citep[e.g.][]{abdel1988alphaomega}, the boundary conditions are evaluated to zero and the boundary value of the magnetic field or any of its time derivatives never enter any subsequent calculations. In the 3D implicit scheme, because of the representation of all quantities (including $\bB$ and any of its time derivatives) in terms of a Galerkin basis, boundary conditions to all orders are satisfied.

Thus in the axisymmetric and Cartesian cases, \cref{eqn:Wuinstant} and \cref{eqn:1ODE} are correct irrespective of the initial choice of Taylor state, as is the fully implicit method of \cref{sec:3D_implicit} for the 3D case. This is to be contrasted with \eqref{eqn:2ODE} that is valid only for the subset of Taylor states satisfying zero and first order boundary conditions. 

\section{An appraisal of Taylor's method}
\label{sec:Taylor_appraisal}
\subsection{An illustration of when Taylor's method fails} \label{sec:Tayfail}

We are now in a position to provide a first explicit demonstration that Taylor's ODE \cref{eqn:2ODE} fails when using an initial Taylor state that does not satisfy first order boundary conditions. We show this in two parts. Firstly, within axisymmetry, we demonstrate that Taylor's \cref{eqn:2ODE} is formally inconsistent with the BWR equation \eqref{eqn:Wuinstant}; secondly, we plot an explicit solution of Taylor's equation and show that does not agree with those derived from other methods known to be correct. In sections 8--10 we will show that our generalised version of Taylor's method shows agreement among all methods.

We consider the simple case of the dipolar, single spherical harmonic $l=1$ axisymmetric poloidal magnetic field

\[ \bB = \curl \curl Ar^2 (30 r^4-57r^2+25)\cos(\theta) \hat{\bf r}, \]
where $A=\sqrt{231/20584}$ is a scaling constant (see \cref{sec:scaling}). We note that $\bB$ satisfies the electrically insulating boundary conditions \eqref{eqn:bc}, and is an exact Taylor state owing to its simple symmetry.


The ageostrophic flow (determined for example by the method described in Appendix \ref{sec:u_a_method}) has only an azimuthal component given by
\begin{eqnarray} u_\phi &= A^2 \big[9120s^7+(50400z^2-26184)s^5+(50400z^4-95760z^2+23888)s^3+ \nonumber \\  & (16800z^6-47880z^4+42000z^2-6824)s \big]. \end{eqnarray}

From \cref{eqn:Wuinstant}, the geostrophic flow satisfies 
\begin{equation} \label{eqn:manipWuinstant} \frac{d}{ds}\left(\frac{u_g(s)}{s} \right)=-\frac{S_0(s)}{s\alpha_0(s)} = -\frac{Q_5(s^2)}{sQ_2(s^2)}
\end{equation}
where we have used
\begin{align} \alpha_0(s) = \alpha(s)&=\frac{198}{2573} s^4 \pi (1-s^2)^{3/2} (640s^4-1168s^2+535), \nonumber \\
S_0(s) = S(s)&=-\frac{66528}{86064277}s^4\pi(1-s^2)^{5/2}(46387200s^8-138624000s^6+142265512s^4-57599212s^2+7255185),\nonumber \end{align}
and where, for typographic purposes, $Q_N(s^2)$ is used for brevity to represent a polynomial of order $N$ in $s^2$.
Substituting this into Taylor's equation \eqref{eqn:2ODE}, along with 
$$ \beta(s) =-\frac{198}{2573}s^3\pi(1-s^2)^{1/2}(7680s^6-17440s^4+12456s^2-2689),$$
%
%
%
leaves an unbalanced equation: the left and right hand sides of \eqref{eqn:2ODE} are the distinct quantities
$$\frac{\sqrt{1-s^2} Q_9(s^2)}{Q_2(s^2)}, \qquad \sqrt{1-s^2}Q_7(s^2).$$
%
%
%
Therefore, for this choice of $\bB$, none of the solutions of the first order ODE \cref{eqn:Wuinstant} satisfy the second order ODE \cref{eqn:2ODE}.
Full equations are given in the supplementary material.

This specific case (which is illustrative of the general case) shows that \cref{eqn:2ODE} and \cref{eqn:Wuinstant} are inconsistent: in particular the first order \cref{eqn:Wuinstant} is not simply the first integral of the second order \cref{eqn:2ODE}. The reason why they are not consistent is that although the ODEs are derived from the equivalent forms \eqref{eqn:Taylor_alternative} and \eqref{eqn:Taylor}, the boundary terms are used to derive \eqref{eqn:Wuinstant} but not \eqref{eqn:2ODE}. Thus the two equations embody different information. 
In this example, Taylor's method is equivalent to the erroneous replacement of $\partial_t B_\phi$ (which is zero) in the boundary term of \eqref{eqn:alter1}, by ${\mathcal I}_\phi \ne 0$. 
%
There appears to be no simple way of amending the coefficients $\alpha$ and $\beta$ for the derivation of \eqref{eqn:2ODE} to include the boundary information.
While the initial magnetic field has been chosen such that it satisfies the boundary condition \eqref{eqn:bc}, through computing $\partial_t \bB$ we can show that, based on Taylor's solution, the initial rate of change of the magnetic field violates this boundary condition.  

To confirm that Taylor's method is not generally valid, we now directly compare solutions from various methods.
Integrating \cref{eqn:Wuinstant} analytically gives the solution 
\begin{eqnarray} u_g = & \frac{A^2 s}{918060} \bigg[ 9926860800\,{s}^{6}-32213813760\,{s}^{4} +37855940880\,{s}^{2}+C -11143964160\,\ln  s + \nonumber \\ &  30664844\, \sqrt {21}\arctan \left(  \left( 80\,{s}^{2}-73 \right)/\sqrt{21} \right)   + 101695629\,\ln  \left( 640\,{s}^{4}-1168\,{s}^{2}+535 \right)  \bigg]. \label{eqn:axipol_u_g} \end{eqnarray}
We note that the solution is a sum of odd polynomials, an $s \ln(s)$ term and additional (and non singular) $\ln$ and arctan terms. The constant $C$ is determined through enforcing zero solid body rotation (equation \eqref{eqn:angmomcon}).
The solution for $u_g$ is everywhere continuous and finite, only at $s=0$ is there a weak singularity: $\partial_s (u_g/s) \sim 1/s$.
We also observe that there is no singularity at $s=1$.
A comparable analytic solution but for a quadrupolar axisymmetric magnetic field was given in \citet{Li_etal_2018}, which is also regular everywhere except for a weak $s \ln(s)$ singularity at $s=0$.
\begin{figure}
\centering
\includegraphics[width=0.7\textwidth]{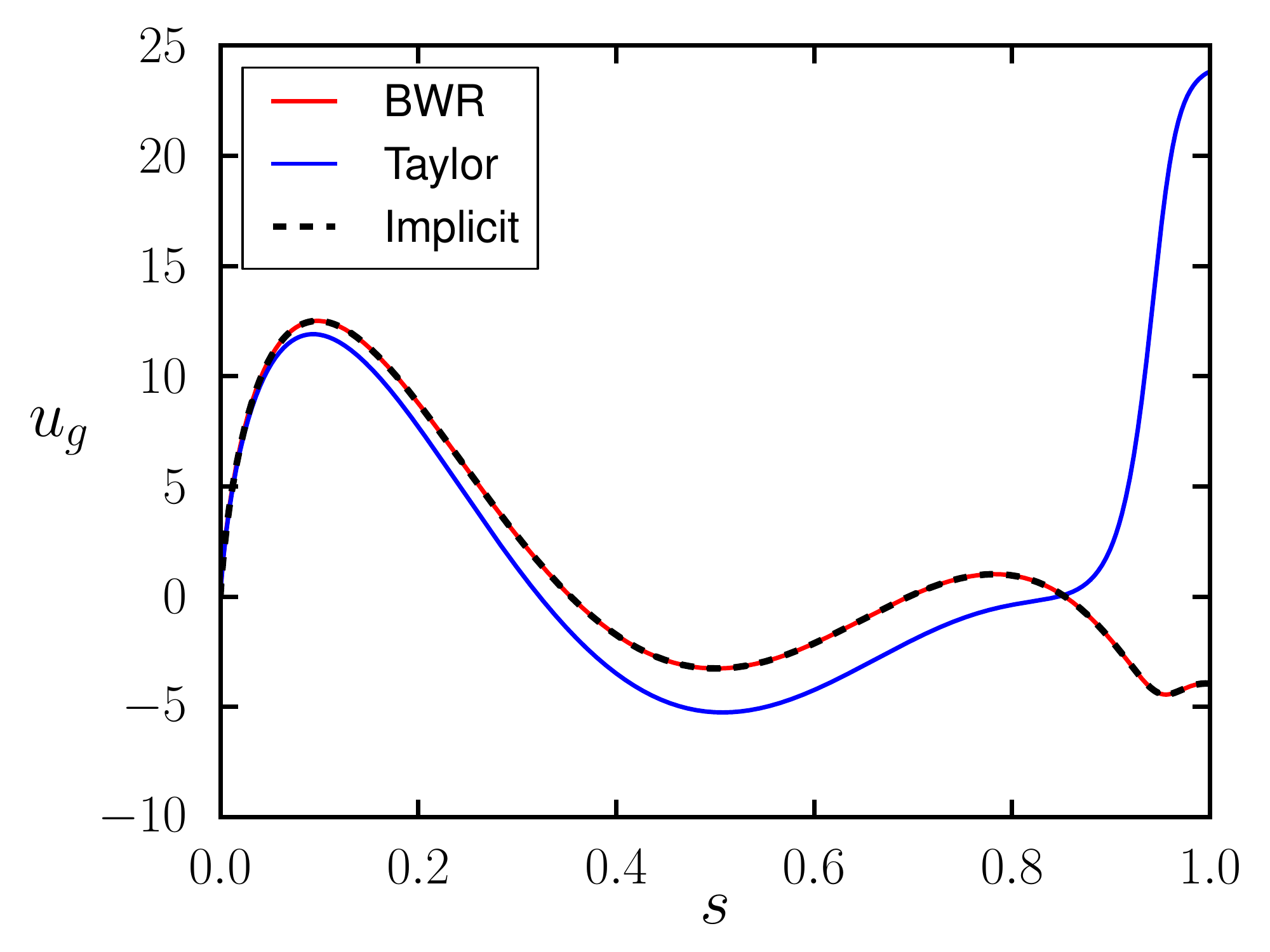}
\caption{ \label{fig:Example1_axisymmetricpoloidal}  Comparison of solutions for the geostrophic flow for an axisymmetric dipolar poloidal initial field. Red is the analytic solution of the first order BWR \cref{eqn:Wuinstant}, blue is a numerical solution of Taylor's second order ODE (see text) and dashed black is the solution using the implicit time step method with $h=10^{-9}$.}
\end{figure}
That the analytic expression \eqref{eqn:axipol_u_g} is indeed the true solution is confirmed by figure \ref{fig:Example1_axisymmetricpoloidal} which compares it to 
the geostrophic flow given by the independent 3D implicit scheme of section \ref{sec:3D_implicit}; the two solutions over-plot. 
A contour plot of the total azimuthal flow is shown in \cref{sec:Earth-like} (\cref{fig:contour_u_phi_polaxi}).

We now directly compare this solution with that obtained by solving Taylor's equation \eqref{eqn:2ODE}, shown as the blue line of \cref{fig:Example1_axisymmetricpoloidal}.
This solution is found by adopting the expansion \eqref{eqn:specexp} and minimising the integrated squared residual
\begin{equation} \int_0^1 \big[ {\cal L}_T(u_g) - G(s) \big]^2 ds.\label{eqn:Taylor_method} \end{equation}
with respect to the spectral coefficients, whose truncation is increased until the solution converges. 

Although all solutions agree at small $s$, Taylor's solution shows significant differences from the others for $s>0.8$.

\begin{figure}
\centering
\includegraphics[width=0.7\textwidth]{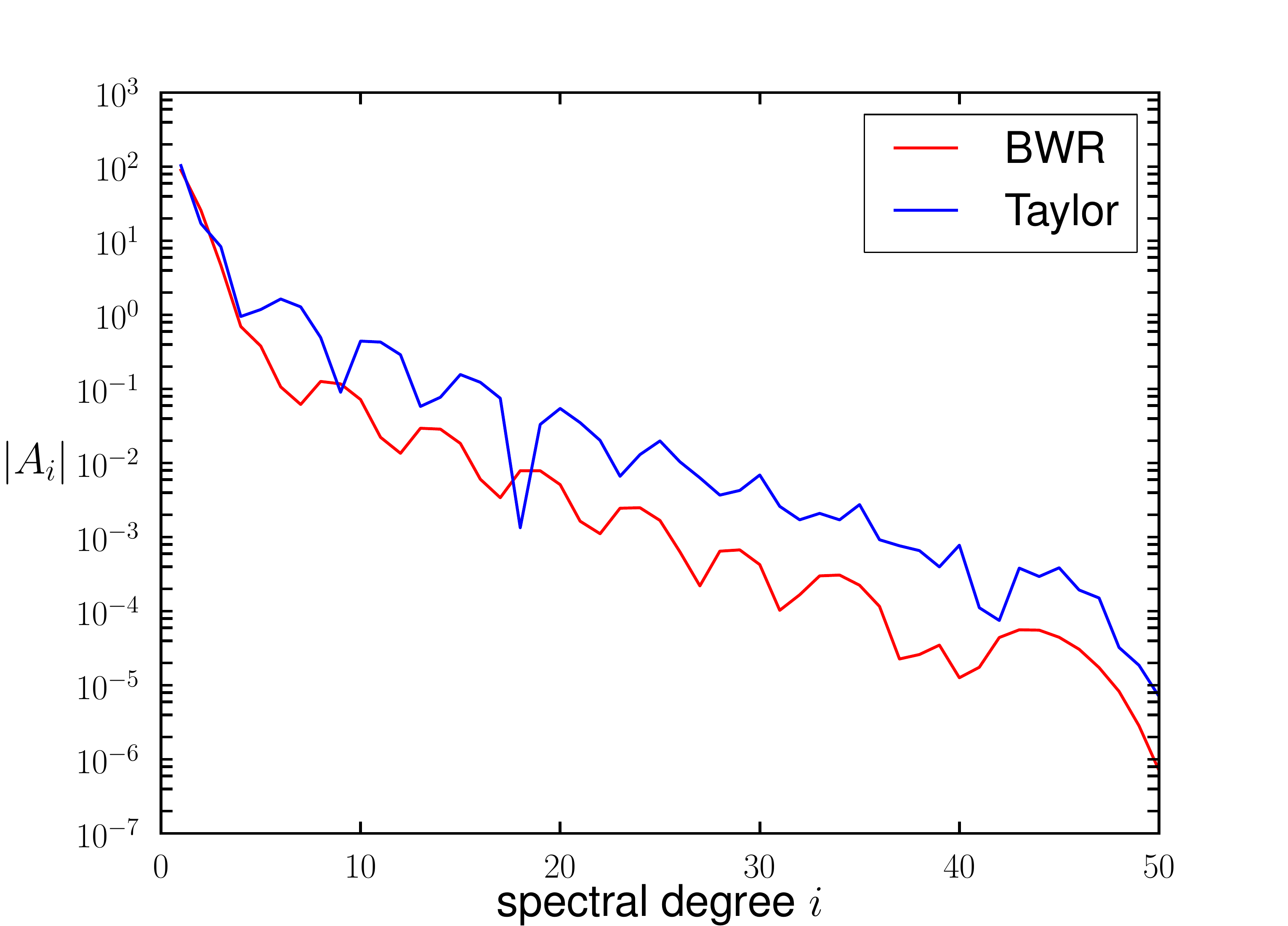}
\caption{A comparison of the absolute value of the polynomial spectral coefficients $A_i$, defined in \cref{eqn:specexp}, against degree for numerical solutions using the Braginsky-Wu-Roberts and Taylor formulations.} \label{fig:spect_pow_log_50_abs}
\end{figure}

It is also of interest to assess numerical convergence to solutions of \cref{eqn:2ODE,eqn:Wuinstant}.
Although we have an analytic solution to \eqref{eqn:Wuinstant}, we use the same numerical method as given above but now applied to \eqref{eqn:Wuinstant} by minimising
\begin{equation} \int_0^1 \big[ {\cal L}_{BWR}(u_g) + S_0(s) \big]^2 ds. \end{equation} 
\Cref{fig:spect_pow_log_50_abs} demonstrates that convergence of the solution is faster for the correct, first order equation \eqref{eqn:Wuinstant} than for Taylor's equation \eqref{eqn:2ODE}. 
Therefore, aside from Taylor's equation being generally inapplicable, it seems that converged solutions are also relatively more difficult to find. 
%
%
\subsection{Specific cases when Taylor's method succeeds}\label{sec:Tay_right}

For arbitrary purely-toroidal Taylor states bounded by an electrical insulator, $\bB$ vanishes on $r=1$ and in this special case Taylor's methodology is correct. This is because the boundary term involving $B_\phi B_r$ (see \cref{eqn:Taylor_alternative}) has a ``double zero'' and so, when considering its time derivative, erroneous substitution for $\partial_t {\bB}$ leaves it invariant as zero.

Taking the time derivative of \eqref{eqn:Taylor_alternative}, noting that the boundary term is zero, we obtain
\begin{equation} s^2 \int_{C(s)} \left( \frac{\partial B_\phi}{\partial t} B_s  + B_\phi \frac{\partial B_s}{\partial t} \right) d\phi dz = 0. \label{eqn:tor1} \end{equation}
Using the 3D extension of \eqref{eqn:simpinduct} 
\begin{equation} \label{eqn:3D_induct} \curl(u_g(s) \phihat  \times\bB) = s B_s \frac{\partial (u_g/s)}{\partial s} \phihat - \frac{u_g}{s} \frac{ \partial_1 {\bB}}{\partial \phi}
\end{equation}
where $\partial_1/\partial \phi$ is a derivative with respect to $\phi$ that leaves invariant the unit vectors \citep[see e.g.][]{Book_Jault_2003}, 
the term involving $u_g$ in \eqref{eqn:tor1} becomes
$$ s \frac{d}{ds}\left(\frac{u_g(s)}{s} \right) \int_{C(s)} B_s^2   d\phi dz - \frac{u_g}{s} \int_{C(s)} \frac{\partial}{\partial \phi} \left( B_\phi B_s \right) d\phi dz. $$
Noting that the last integral is zero, we obtain an equation (that holds in 3D) that is of the same form as the axisymmetric BWR equation \eqref{eqn:Wuinstant} 
\begin{equation} s \alpha(s)   \frac{d}{ds}\left(\frac{u_g(s)}{s} \right) = - S(s). \label{eqn:purtor} \end{equation}

As an illustration we consider the non-axisymmetric $l=1$, $m=1$ toroidal magnetic field
\[ \bB = \curl A r^2 (1-r^2)\cos(\phi)\sin(\theta) \hat{\bf r}, \]
where $A$ is a scaling constant which takes the value $\frac{3}{4}\sqrt{105}$.
The ageostrophic flow is
\begin{align}
\vec{u}_{a}&=\frac{A^2}{3}s\sin\phi\cos\phi(5s^4-6s^2z^2-3z^2-3z^4-10s^2+6z^2+5) \shat  \nonumber \\ &+\frac{A^2}{15}(\cos^2\phi (105s^5-30z^2s^3-130s^3-15z^4s+30z^2s+25s)-56s^5+72s^3-16s)\phihat \nonumber \\
&+\frac{4A^2}{3} s^2z(3s^2+z^2-3)\cos\phi\sin\phi \zhat,
\end{align}
and, solving \eqref{eqn:purtor}, the geostrophic flow is
\begin{equation} \label{eqn:nosing}
u_g(s)=A^2\left(\frac{97}{30}s^5-\frac{77}{15}s^3+sC_1\right),
\end{equation}
where $C_1$ is determined through considerations of angular momentum. Note the  absence of singularities in this solution.

This geostrophic flow is shown in \cref{fig:Example2_nonaxisymmetrictoroidal_alt}, and we note that the 3D implicit method and Taylor's method give the same solution (not shown).

\begin{figure}
\centering
\includegraphics[width=0.7\textwidth]{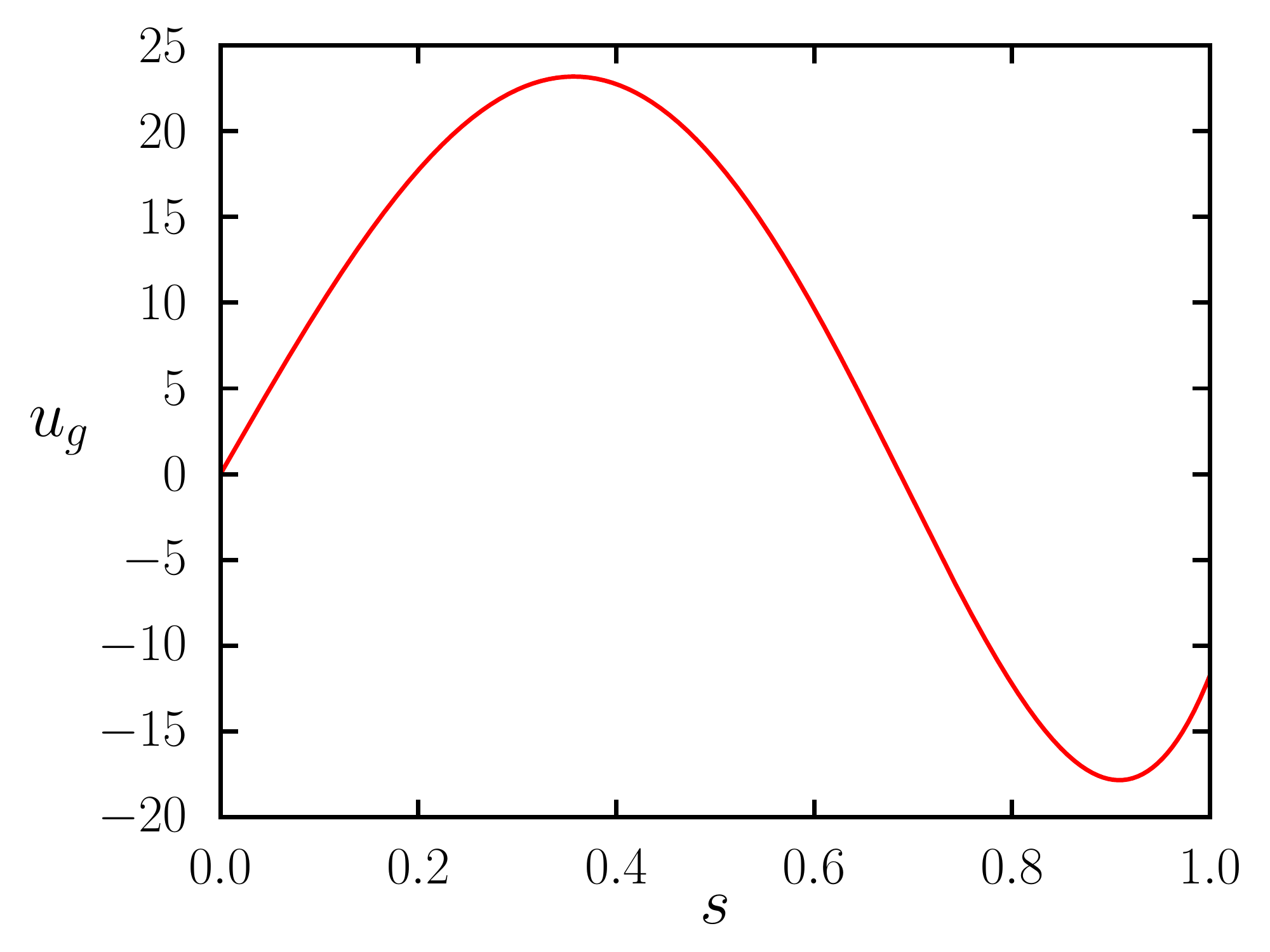}
\caption{The geostrophic flow corresponding to a non-axisymmetric $l=1$, $m=1$ purely-toroidal Taylor state. 
} \label{fig:Example2_nonaxisymmetrictoroidal_alt}
\end{figure}

We can show analytically that for non-axisymmetric toroidal fields, Taylor's equation \eqref{eqn:2ODE} and \eqref{eqn:purtor} are equivalent, up to the requirement of a further boundary condition for the second order differential \cref{eqn:2ODE}. We do this by taking $s^{-1} \partial_s$ of  \cref{eqn:purtor} to obtain 
$$\frac{1}{s} \frac{d}{d s} \bigg[ s\,\alpha(s) \frac{d}{ds} \left(\frac{u_g(s)}{s} \right) \bigg]=- \frac{1}{s} \frac{d S(s)}{d s} = G(s)$$
where the right hand equality is due to the equivalence of \cref{eqn:manipulation1,eqn:Taylor}. By comparing coefficients of derivatives of $u_g$ we reproduce \cref{eqn:2ODE} if and only if $$\beta(s)=\frac{\alpha(s)}{s}+\frac{d\alpha (s)}{ds}.$$ 

As shown by \cite{Roberts_Aurnou_2011} (their B.26c), exploiting $\grad \cdot \vec{B} = 0$ and the vanishing of a toroidal field on $r=1$, in this specific case $\beta$ can be written as
$$\beta(s)= \int_{C(s)} \left(3B_s^2 + s\, \pd{(B_s^2)}{s}\right)\, s\, d\phi\, dz .$$

We can then write $$\frac{\alpha(s)}{s}+\frac{d\alpha(s)}{ds}  = \int_{C(s)} s B_s^2\, d\phi\, dz + \frac{d}{ds} \int_{C(s)} s^2 B_s^2\, d\phi\, dz = \int_{C(s)} \left(3B_s^2 + s\, \pd{(B_s^2)}{s}\right)\, s\, d\phi\, dz$$
where the right most equality makes use of Leibniz's integral rule  with the boundary condition $\vec{B}={\bf 0}$ on $r=1$.
%
Hence, in this specific case, both ODEs are equivalent and therefore Taylor's ODE is valid.

\section{A generalisation of Taylor's analysis} 
\label{sec:Altmeth}

To modify the method of Taylor so that it applies to a magnetic field that does not satisfy the first order boundary conditions, 
we use \eqref{eqn:alter1} to impose stationarity of the Taylor constraint. 
Equally, we could impose stationarity of the equivalent equation \eqref{eqn:Taylor_alternative} but it is simpler to avoid the additional integral in $s$.
Bearing in mind
%
our discussion in \cref{sec:cty}, we take particular care to ensure correct handling of the boundary term. 

The magnetic field matches continuously (since  $\eta \ne 0$) with
an external potential field within the mantle $r \ge 1$. Note that our assumption of a globally continuous solution differs from the case when $\eta = 0$, for which horizontal components of $\bB$ may be discontinuous on $r=1$ \citep{Book_Backus_etal_96}.
In our setting where $\eta \ne 0$, the field matches continuously but not necessarily smoothly across $r=1$.
We note however that owing to $\nabla \cdot\bB = 0$, the radial component of $\bB$ (and all its time derivatives) are always smooth at $r=1$ \citep[see e.g.][]{Gubbins_Roberts_87}: thus only the horizontal components $B_\theta$ and $B_\phi$ are not in general smooth.

Thus, in the first term of \cref{eqn:alter1} we may substitute at $t=0$ 
\begin{align} 
&\partial_t B_s = {\mathcal I}_s(\bu,\bB), \qquad 0 \le r < 1,\nonumber \\
&\partial_t B_\phi = {\mathcal I}_\phi(\bu,\bB), \qquad 0 \le r < 1. \label{eqn:Bdot_ind} \end{align}
For the second (boundary) term, we may substitute
$\partial_t B_r = {\mathcal I}_r(\bu,\bB)$
but the initial value of $\partial_t B_\phi$ at $r=1$ is not specified by ${\mathcal I}_\phi$ alone, as assumed in Taylor's derivation. 

The key remaining issue is then to find 
the initial boundary value of $\dot{B}_\phi$, for which we present three  methods below.  Having done this, all terms are defined and \eqref{eqn:alter1} provides an implicit determination of $u_g$ up to the usual considerations of solid body rotation and regularity.

We observe that the form of \cref{eqn:alter1} differs markedly from \cref{eqn:2ODE,eqn:Wuinstant}: in addition to the spatial derivatives of $u_g$ (in the leftmost term), there is an explicit boundary term. For the general case, this boundary term must be retained, although it may be neglected under certain circumstances: e.g. those of sections  \ref{sec:Tay_right} and \ref{sec:Earth-like}.

We remark that the above instantaneous method can be amended to a first order implicit scheme (akin to \cref{eqn:1ODE}) by considering 
\begin{equation} \frac{1}{s} \frac{\partial}{\partial s} \bigg[ s^2 \int_{C(s)} (\dot{B_\phi} B_s + B_\phi \dot{B_s}) d\phi dz \bigg] + \frac{s}{\sqrt{1-s^2}} \oint_{N+S} (\dot{B}_\phi B_r + B_\phi \dot{B_r}) d\phi=- \frac{1}{hs}\pd{\Gamma_z(s,t)}{s},\label{eqn:genimplicit} \end{equation}
As before, this equation is applicable even when $\Gamma_z \ne 0$, that is, if the solution is close but not exactly on the Taylor manifold.

\subsection{A potential-based spherical transform method}
\label{sec:potential-based}
One way to find $\dot{B}_\phi$ on $r=1$ is to note that it is the azimuthal component of 
the potential field in $r\ge 1$
$$\dot{\bf B} = -\nabla \dot{V}, \;\; \nabla^2 \dot{V}=0. $$
The potential $\dot V$ is itself determined through continuity of the radial component $\dot{B}_r$ at $r=1$ and thus depends upon $u_g$. 
%
%
%
This method of determining $\dot{B_\phi}$ has been introduced in the study of torsional waves by \cite{Book_Jault_2003}, but is implemented here for the evaluation of the geostrophic flow.

The time derivative of the potential $\dot{V}$ can be written in terms of orthonormal spherical harmonics $Y_{lm}$ with unknown coefficients $a_{lm}$ as
\[ \dot{V} = \sum_{l,m} {\dot a}_{lm} r^{-(l+1)} Y_{lm}, \]
where $0 \le l \le L_{max}$ and $-l \le m \le l$ and
\[ \dot{a}_{lm} = \frac{1}{l+1} \oint_{r=1} \dot{B_r} Y_{lm} d\Omega, \]
where $\Omega$ is an element of solid angle. It follows then that on $r=1$
\[ \dot{B}_\phi = -\frac{1}{\sin\theta} \sum_{l,m} \dot{a}_{lm} \frac {\partial Y_{lm}}{\partial \phi}. \]

Key to the implementation of this method here is a spectral expansion of $u_g$, for example \eqref{eqn:specexp}, because it allows $\dot{B}_r$ (which depends on the $I+2$ spectral coefficients of $u_g$) to be evaluated everywhere on the boundary, as required in the above spherical transform. This is to be contrasted for example with a finite difference representation of $u_g$ where no such evaluation is possible. 

To find $u_g$, we note that all time-derivative terms in the left hand side of \eqref{eqn:alter1}, including those evaluated on the boundary, are linear in the unknown coefficients $(A_0,A_1,\dots,A_I,B)$, and hence the residual is of the form
$$ R(s) = \sum_{i=0}^I A_i a_i(s) + B b(s) + c(s)$$
for some functions $a_i$, $b$ and $c$ that depend on $\bB$ and $\bu_a$.
We formulate a single equation for the coefficients defining $u_g$ by minimising the quantity $\int_0^1 R^2 ds$ (which is quadratic in the coefficients that we seek). Note that the solution is approximate and depends on two parameters $I$ and $L_{max}$, which represent the truncation of the expansion used and care must be taken to ensure we achieve convergence in each.

\subsection{A potential-based Green's function method}

An alternative method for determining the potential $\dot{V}$ at the core mantle boundary is through the use of a Green's function convolved with ${\dot B}_r$ on $r=1$. Following \cite{gubbins1983use, Johnson_Constable_97}, the relevant Green's function associated with the Laplace equation in the exterior of a sphere with Neumann boundary conditions is
$$N(x,\mu) =\frac{1}{4\pi}\left( \ln \left(\frac{f + x - \mu}{1 - \mu}\right) - \frac{2x}{f}\right), $$
where $x=\frac{1}{r}$, $f=(1-2x\mu+x^2)^\frac{1}{2}$, $\mu=\cos\theta\cos\theta'+\sin\theta\sin\theta'\cos(\phi-\phi').$
This can be expressed as $N(x,\mu)=N\left(\frac{1}{r},\theta,\theta',\phi - \phi' \right)$, which is the potential at location $(r,\theta, \phi)$ in $r\ge 1$ due to a singularity of unit strength in the radial field at $(\theta', \phi')$ on the core-mantle boundary.
%
Making use of the periodicity of $\phi$, the magnetic potential in the region $r \ge 1$ can then be written as

$${\dot V}=\int_0^{2\pi}\int_0^\pi {\dot B}_r(1,\theta',\phi-\phi')N\left(\frac{1}{r},\theta,\theta',\phi \right)\sin\theta' d\theta' d\phi',$$
and so
%
$${\dot B}_\phi(1,\theta,\phi)=-\frac{1}{r\sin\theta}\int_0^{2\pi}\int_0^\pi \pd{{\dot B}_r(1,\theta',\phi-\phi')}{\phi}N\left(\frac{1}{r},\theta,\theta',\phi' \right)\sin\theta' d\theta' d\phi'.$$
Like the previous method, this procedure of evaluating ${\dot B}_\phi$ on $r=1$ requires an integral over all solid angle. Using again our spectral expansion \eqref{eqn:specexp} this results in ${\dot B}_\phi$ being a linear function of the unknown spectral coefficients; thus using \cref{eqn:alter1} the geostrophic flow can then be determined as in section \ref{sec:potential-based}.
%
\subsection{A modal projection}\label{sec:protheo}
A further alternative method to find $\dot{B}_\phi$ on $r=1$, which does not rely on a magnetic potential, is to employ a modal basis set for the magnetic field that is complete and satisfies the required boundary conditions. Here we adopt a numerically expedient Galerkin basis set (see \Cref{sec:Apb} for details), whose orthonormal poloidal and toroidal modes are written respectively as $\vec{\mathcal{S}}_{(l,n)}^m$ and $\vec{\mathcal{T}}_{(l,n)}^m$. 

By using such a representation, boundary conditions to all orders are automatically satisfied and therefore a direct substitution of the projected representation of ${\mathcal I}$,
\begin{equation} \label{eqn:projinduct} 
\proj{ {\mathcal I}} = \sum_{l,m,n} c_{l,m,n} \vec{\mathcal{S}}_{(l,n)}^m + d_{l,m,n} \vec{\mathcal{T}}_{(l,n)}^m, \end{equation}
for $\partial_t \bB$ in all three components for the whole sphere $r \le 1$ is justified. 
In the above, $l$ is bounded by $L_{max}$, $0 \leq n \leq N_{max}$ and $\proj{\bf x}$ indicates the modal projection of $\bf x$ (see Appendix \ref{sec:projmeth}).

As before, key to the method here is the spectral representation \eqref{eqn:specexp} for $u_g$; the coefficients $c_{l,m,n}$ and $d_{l,m,n}$, found by integration (see \cref{sec:projmeth}), then depend linearly on the unknown coefficients $A_i$ and $B$.



%
\Cref{eqn:alter1} can be then written as the following, in which $u_g$ appears explicitly

\begin{equation} \frac{1}{s} \frac{d}{d s} \bigg[ s\,\alpha(s) \frac{d}{ds} \left(\frac{u_g(s)}{s} \right) \bigg] + \frac{s}{\sqrt{1-s^2}} \oint_{N+S} \bigg[ B_\phi \big\{ {\curl(u_g \phihat  \times \bB)} \}_r  + B_r  \big\{\proj{ \curl(u_g \phihat  \times \bB) }\}_\phi \bigg] \, d\phi= \tilde{G}(s) \label{eqn:corrected_T1} \end{equation}
and
\begin{eqnarray} -\tilde{G}(s)  =  \frac{1}{s} \frac{\partial}{\partial s} \bigg[ s^2 \int_{C(s)} ({C}^{a}_\phi B_s + {C}^{a}_s \, B_\phi) d\phi dz \bigg] + \frac{s}{\sqrt{1-s^2}} \oint_{N+S} {B_\phi} ({C}^{a}_r + {B_r} [\proj{ \vec{C}^{a}}]_\phi ) d\phi. \label{eqn:correction1} \end{eqnarray}


Although on one level a simpler method than those previously presented because we do not need to calculate $\dot{V}$, in fact the method is more computationally expensive for two reasons. First, we need to check convergence in three parameters: $I, L_{max}, N_{max}$, rather than just the first two; second, because the orthonormality requires an integration over radius, in addition to the integration over solid angle required by both methods. 

\section{Examples of the geostrophic flow in 3D}
\label{sec:3D_examples}

We now give some examples to illustrate our generalised methodology for computing the instantaneous geostrophic flow associated with 3D Taylor states, using our spherical-transform method. These will be compared with the solution obtained using the fully implicit 3D method with a very small timestep of $h = 10^{-9}$; in all cases the solutions overplot. In none of the cases is an analytic solution available for comparison. For further comparison we plot also the solution of Taylor's ODE (see \cref{eqn:Taylor_method}).


We consider firstly an example of a non-axisymmetric $l=2$, $m=2$ poloidal magnetic field 
%

\begin{equation} \bB = \curl \curl A\frac{45\sqrt{3}}{4}r^3(7-5r^2)\sin^2\theta \cos 2\phi\, \hat{\vec{r}} \label{eqn:purepol} \end{equation}
where $A = 1/({6\sqrt{390}})$. \Cref{fig:Example3_nonaxisymmetricpoloidal} shows that the implicit and instantaneous solutions agree, whereas similarly to the axisymmetric case of \cref{fig:Example1_axisymmetricpoloidal} we can see that Taylor's solution differs significantly particularly near $s=1$.

For all our three-dimensional solutions the expansion for $u_g$ differs from that in axisymmetry given in \cref{eqn:specexp}. We now don't include a logarithmic term. As discussed in \cref{sec:sing_s0}, the logarithmic behaviour is not expected outside of axisymmetry and would violate the assumed regularity of the magnetic field.

The approximate polynomial solution, with coefficients rounded to 5 significant figures, is
$$u_g=-94.079s+550.14s^3-2196.4s^5+3292.7s^7-2178.4s^9+11996s^{11}-35435s^{13}+42961s^{15}-24113s^{17}+5248.3s^{19},$$
where the expansion has been truncated at $s^{19}$ and convergence achieved with parameters $I = L_{max} = 20$.



 \begin{figure}
 \centering
 \includegraphics[width=0.7\textwidth]{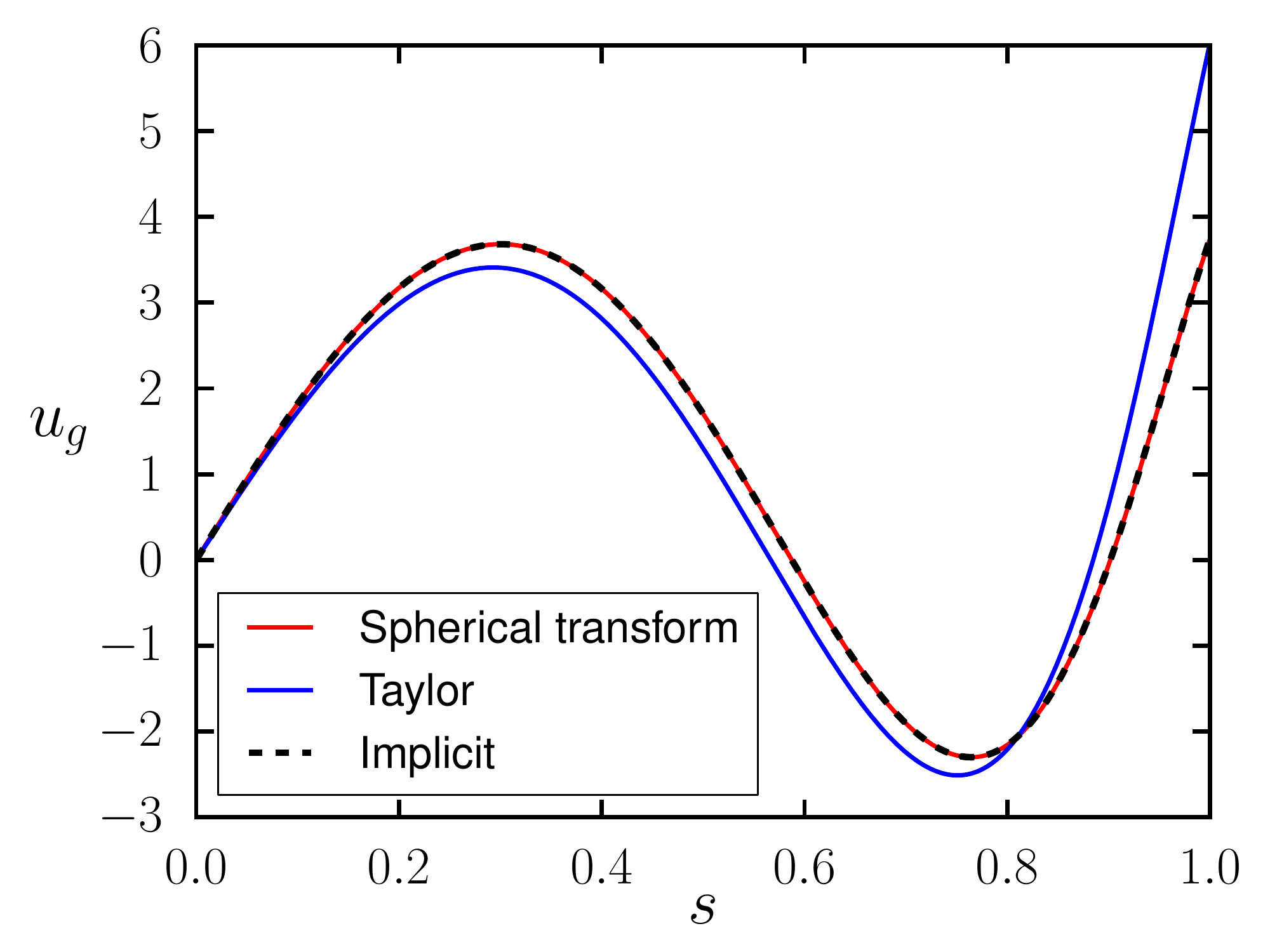}
 \caption{The geostrophic flow for the non-axisymmetric $l=2$, $m=2$ poloidal Taylor state of equation \eqref{eqn:purepol}. Solutions using the spherical transform method, the implicit timestep method with $h = 10^{-9}$ and Taylor's ODE are compared.} \label{fig:Example3_nonaxisymmetricpoloidal}
 \end{figure}

We secondly consider a more complex example of a non-axisymmetric magnetic field, which contains both $l=1$, $m=1$ toroidal and poloidal components
%
\begin{equation} \bB =  \curl A_t \sqrt{3} r^3 (1-r^2) \sin\theta \cos\theta \cos\phi\,\hat{\vec{r}} + \curl \curl A_p \frac{45\sqrt{3}}{2}r^3(7-5r^2)\sin\theta\cos\theta\cos\phi \,\hat{\vec{r}} \label{eqn:torpol} \end{equation}
where $A_t = \frac{5}{4}\sqrt{21}$ and $A_p = \sqrt{7/262440}$.
 \begin{figure}
 \centering
 \includegraphics[width=0.7\textwidth]{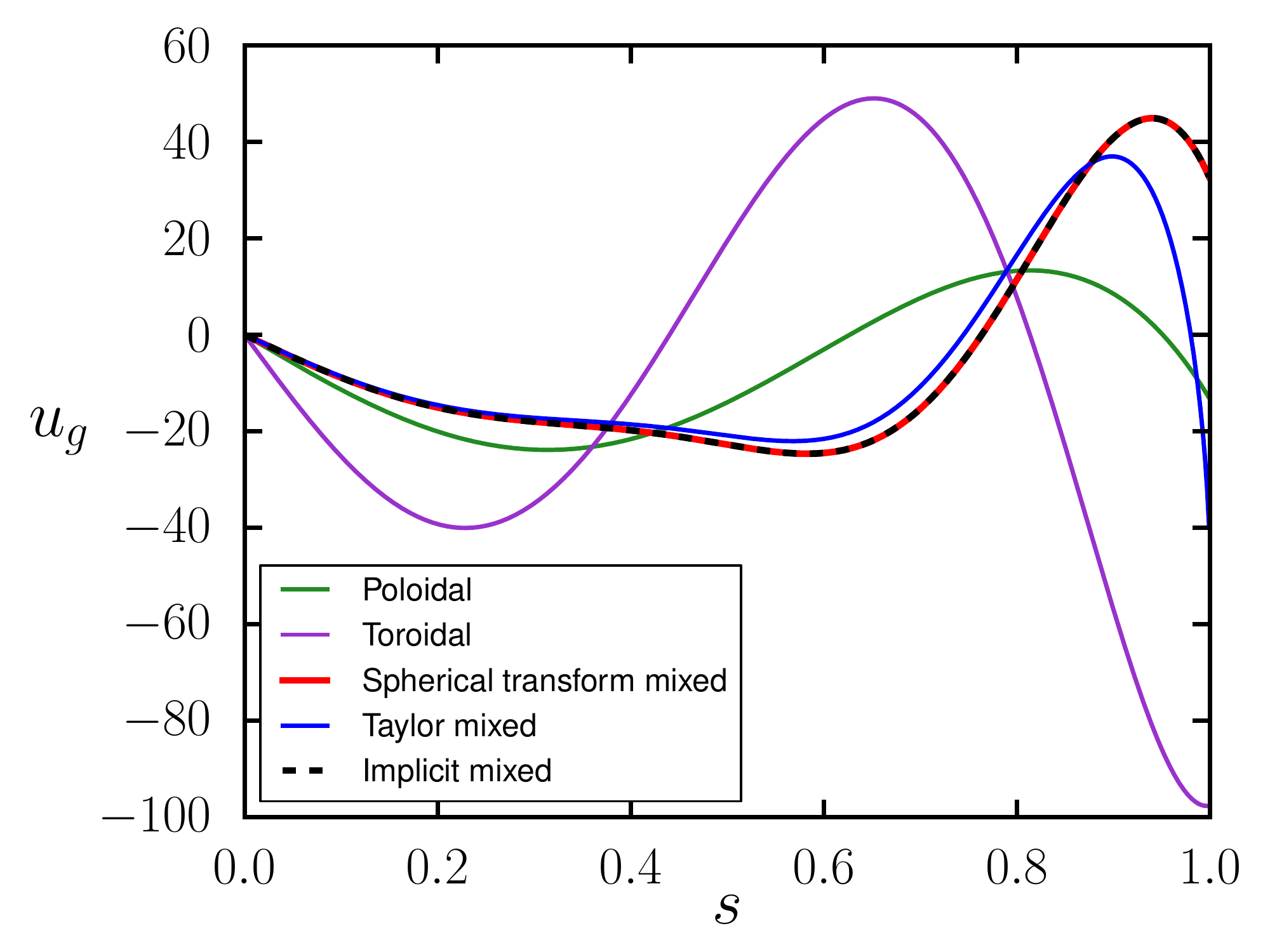}
 \caption{The geostrophic flow for the $l=1$, $m=1$ non-axisymmetric mixed Taylor state of equation \eqref{eqn:torpol}. Solutions using the spherical transform method, the implicit timestep method with $h = 10^{-9}$ and Taylor's ODE are compared. Solutions for solely either the poloidal and toroidal components of the Taylor state using the spherical transform method are also shown.} \label{fig:Example4_nonaxisymmetricmixed}
 \end{figure}
\Cref{fig:Example4_nonaxisymmetricmixed} shows that again the solution using the instantaneous method is validated by the implicit method, whereas Taylor's solution deviates as $s \to 1$. The figure also shows the geostrophic flow generated separately by either the purely-toroidal, or purely-poloidal magnetic field component, each individually a Taylor state. As anticipated by the structure of the equation for $u_g$ (nonlinear in $\bB$), the geostrophic flow driven by the total field does not equal the sum of the individually driven geostrophic flows.  

%
%
\section{Analytic approximation for an Earth-like field}
\label{sec:Earth-like}
%

Based on the present structure of the geomagnetic field, various studies show that it is reasonable to neglect the boundary term in \cref{eqn:Taylor_alternative} in an Earth-like context \citep{robertsWu2014,Roberts_King_2013}. 
This is because not only is the magnetic field likely much stronger inside the core than on $r=1$, but also because only the non-axisymmetric field contributes to the boundary term and it is relatively weak. The estimated strength of the magnetic inside the core is 5~mT, and that of the non-axisymmetric field on $r=1$ is 0.1~mT; therefore the relative magnitude of the boundary to the interior terms is $1/50^2\approx 0.04\%$. The negligible effect of the boundary term has been verified in the case of related studies of torsional waves \citep{jault2005alfven,Roberts_Aurnou_2011}.

Should we neglect the boundary term entirely, then the geostrophic flow is described by the same equation \eqref{eqn:purtor} that pertains to a purely-toroidal field, whose solution is

\begin{equation} u_g(s) = -s \int_0^s \frac{S(s')}{s' \alpha(s')}~ \text{d}s'. \label{eqn:Analapprox} \end{equation}
If $\alpha(s) > 0$ then this equation is integrable. A continuous solution for $u_g$ does not exist, however, if $B_s^2$ is everywhere zero on a geostrophic cylinder $C(s^*)$ (rendering $\alpha(s^*)=0$). Physically, this would mean that the magnetic field fails to couple cylinders on either side of $s=s^*$, leading to a discontinuity in the geostrophic flow.

In the Taylor states we use, $\bB$ is of polynomial form and it then follows that $S$ and $\alpha$ are also polynomial (up to a square root factor arising from the geometry) and therefore $u_g$ can (in general) be found in closed form. 
We note that, in general, $S/\alpha$ is $O(1)$ and so $u_g$ behaves as $s\ln(s)$ as $s \to 0$.






As an example of this approach, here we construct an Earth-like Taylor state comprising an axisymmetric poloidal mode and a non-axisymmetric toroidal mode, scaled such that the magnitude of the asymmetric part is $20\%$ of the magnitude of the axisymmetric part, but that the total rms field strength is unity:

\begin{equation} \bB =  \curl \big[A_t \frac{\sqrt{3}}{2}r^3(1-r^2) \sin^2\theta \cos 2\phi\big] \hat{\vec{r}} + \curl \curl \big[A_p \frac{21}{2}r^2(5-3r^2)\cos\theta\big] \hat{\vec{r}} \label{eqn:earthlike} \end{equation}
where $A_t=\sqrt{28875}/4$ and $A_p = 1/\sqrt{966}$.
%
%
%
%
The analytic solution of \eqref{eqn:Analapprox} is
%



 \begin{align} u_g(s) = &\frac {s}{1185586336} \Big[ -3645348420\,\sqrt {10626}
\arctan \left( {\frac { \left( 5\,{s}^{2}-5 \right) \sqrt {10626}}{42}
} \right) +
9801464537150\,{s}^{6}  \nonumber \\ 
& - 12073529601375\,{s}^{4}-633064443000\,{s}^{2}-25808428800\,\ln  \left( s
 \right)   \nonumber \\ 
& + 25531026444\,\ln  \left( 6325\,{s}^{4}-12650\,{s}^{2}+6367
 \right) +1185586336\,{C_1} \Big]
 , \nonumber
 \end{align}
which is shown in 
\cref{fig:Exampleearthlike} and compared to our solution by the method in \cref{sec:potential-based} in which full account is taken of the boundary terms.
As anticipated, the two solutions are very similar and diverge only close to $s=1$ (where the boundary term has most effect), with an rms difference of about $1\%$, all of which occurs very close to the outer boundary. This validates the neglect of the boundary term for this example, and 
indicates the significance of \cref{eqn:Analapprox} which can be used with confidence to analytically approximate the geostrophic flow generated by an Earth-like field.
However, we note the presence of a logarithmic singularity that (in view of an earlier comment) that we do not expect in a non-axisymmetric case; this is discussed in the following section.
\begin{figure}
\centering
\includegraphics[width=0.5\textwidth]{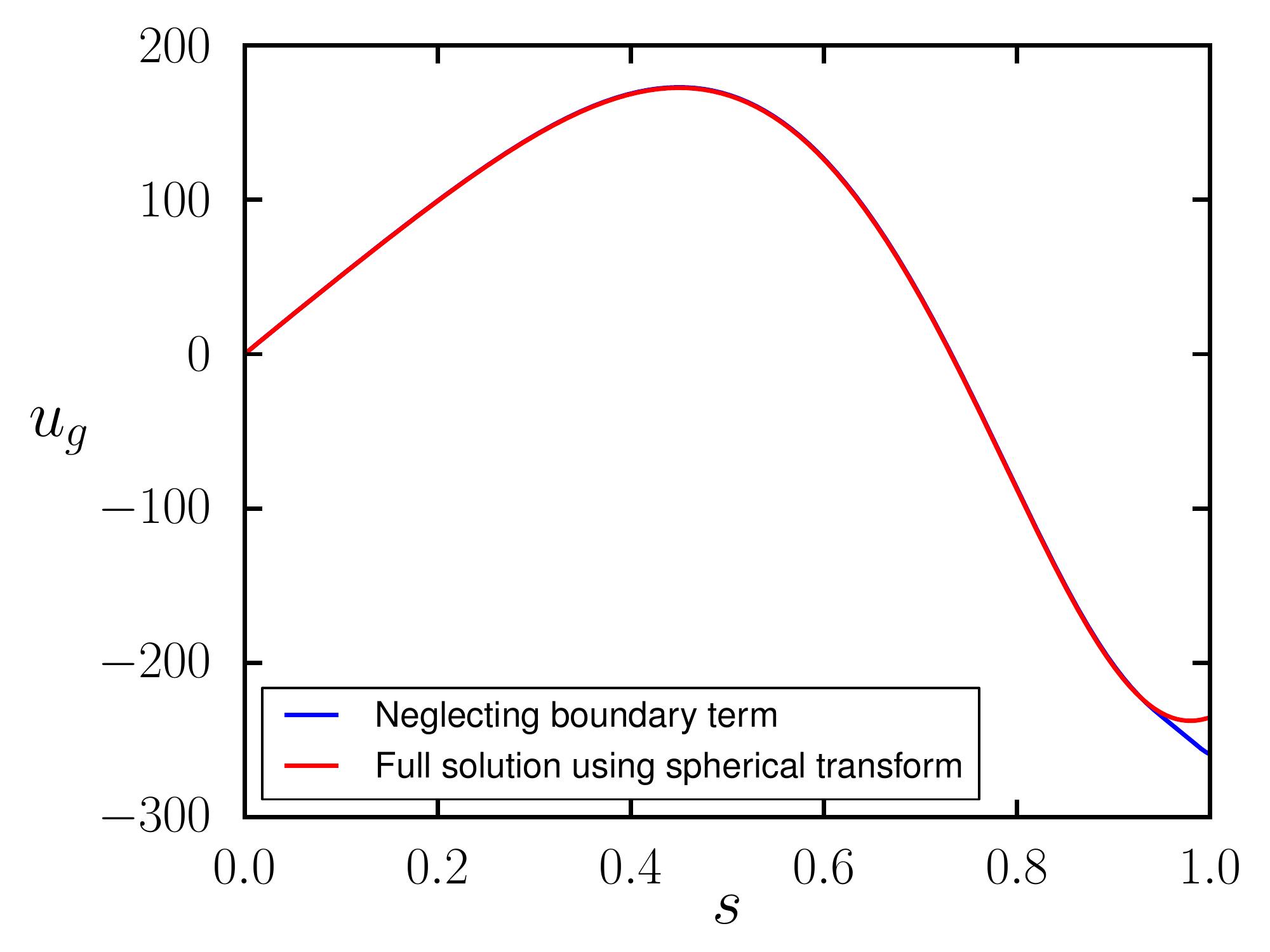}
\caption{The geostrophic flow for a non-axisymmetric Earth-like Taylor state. Numerical solution using the spherical transform method (red) is compared to the analytic solution neglecting the boundary term (blue).} \label{fig:Exampleearthlike}
\end{figure} 


Finally figure \ref{fig:contour_u_phi}(b) shows contours of the total azimuthal component of the flow. Of note is the much higher amplitude of flow associated with the increased complexity of the magnetic field compared to the single-mode magnetic field example of figure \ref{fig:contour_u_phi}(a). The scale of this flow is as would be expected geophysically: maximum dimensionless velocities are of order 100, corresponding to dimensional velocities of order $10^{-4}\text{ms}^{-1}$ consistent with large-scale core flows inferred by secular variation \citep{Holme_2015}.

\begin{figure} 
	\centering
	\begin{subfigure}{.49\textwidth}
		\centering
\includegraphics[width=0.7\textwidth]{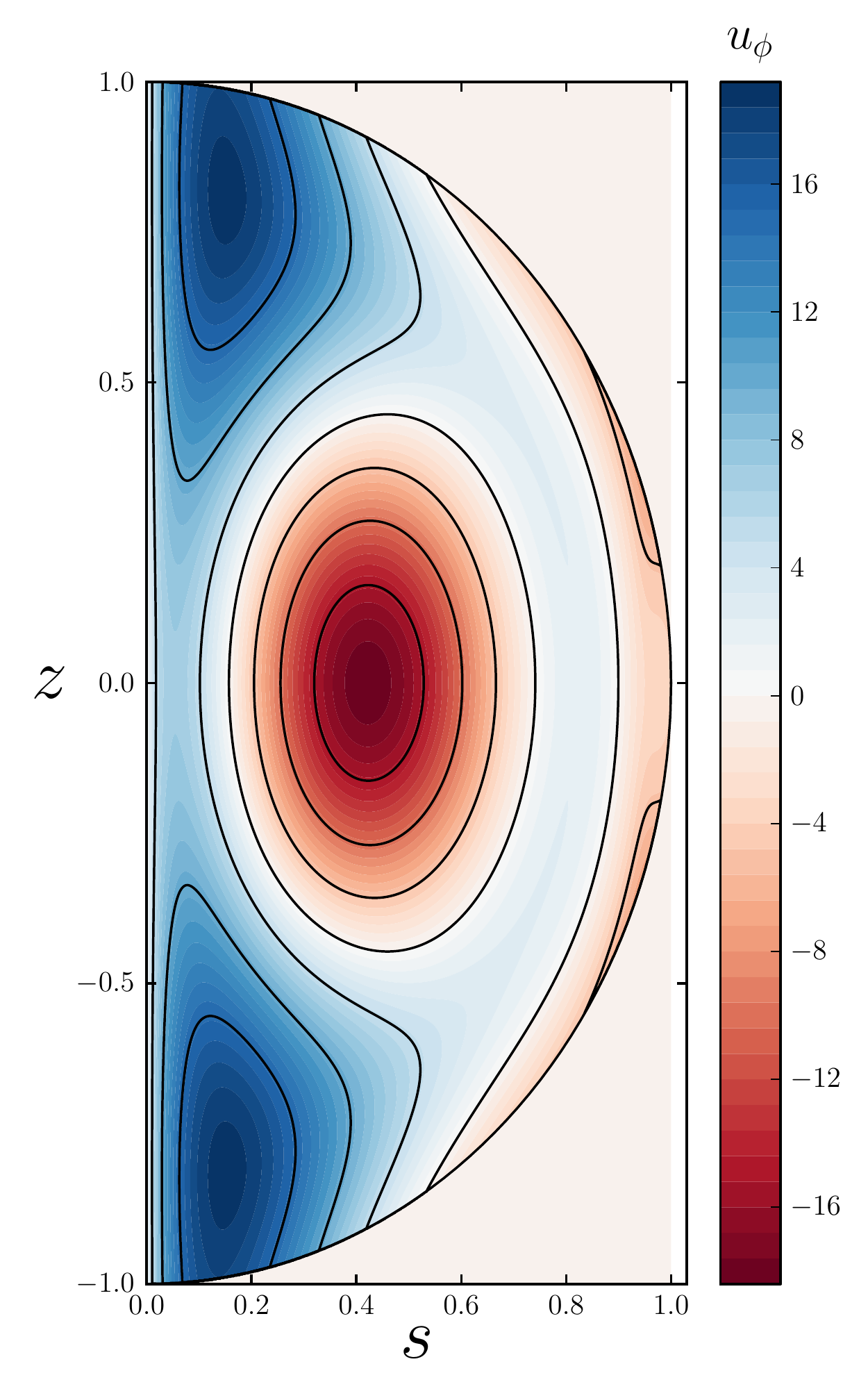}
\caption{}\label{fig:contour_u_phi_polaxi}
	\end{subfigure}%
	\begin{subfigure}{0.49\textwidth}
		\includegraphics[width=0.7\textwidth]{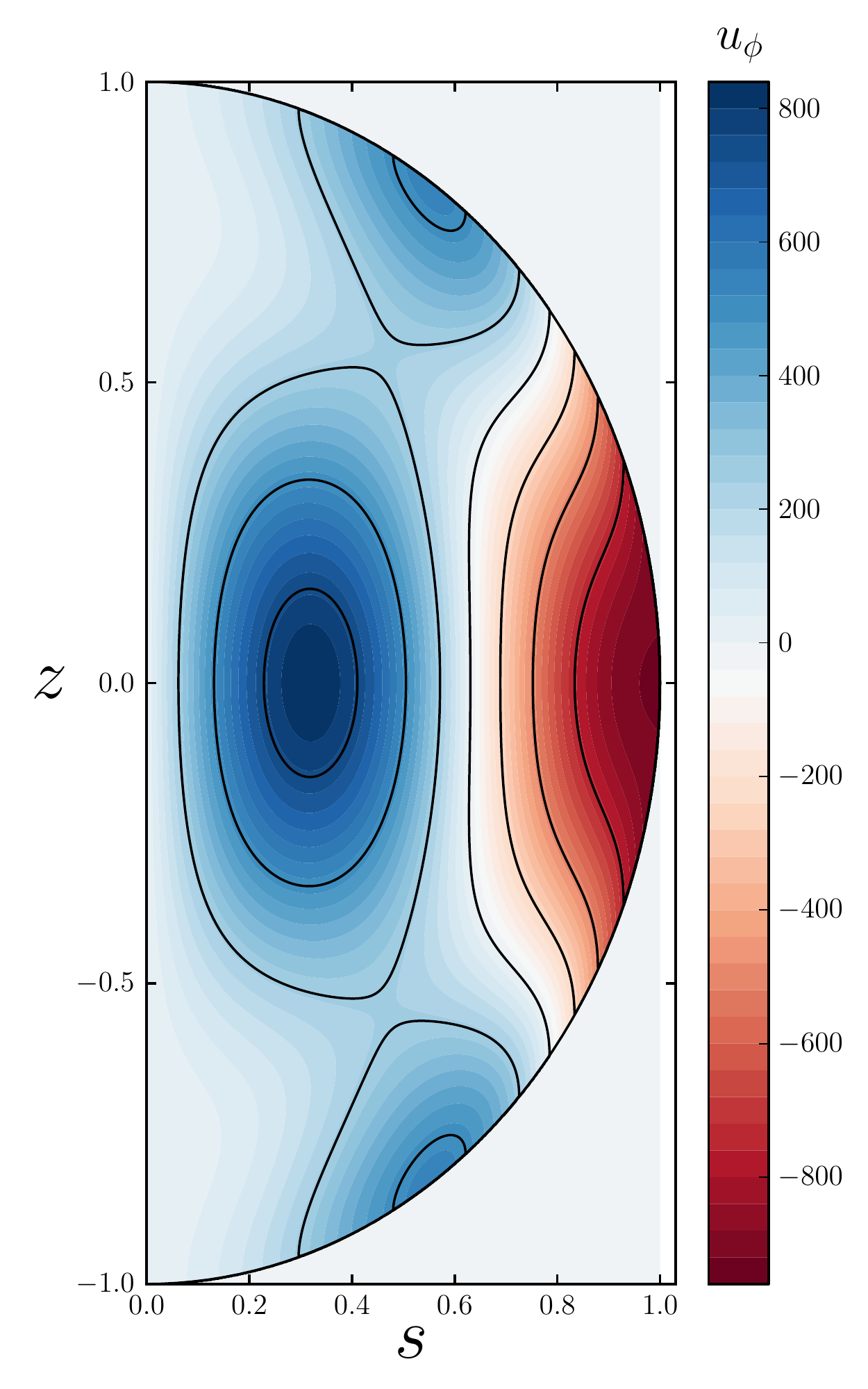}
\caption{}\label{fig:contour_u_phi_earthlike}
	\end{subfigure}
    \caption{Contour plots of (a) the total azimuthal flow $u_\phi$ driven by the axisymmetric poloidal field in \cref{sec:Tayfail}, (b) the axisymmetric part of the total azimuthal flow driven by the Earth-like field of \eqref{eqn:earthlike}. \label{fig:contour_u_phi}}
\end{figure}

\section{Singularities of $u_g$} 
\label{sec:sing}
A key benefit of having an instantaneous description of the geostrophic flow is to make explicit its analytic structure, which then motivates spectral expansions such as \eqref{eqn:specexp} for use with other methods. Assuming $\alpha(s) > 0$, because the equation describing $u_g$ is smooth and regular, $u_g$ is expected to be an odd \citep{Lewis_Bellan_90} finite function on $0<s<1$. There are three places however where the solution may be singular: (i) $s=0$; (ii) $s=1$ and (iii) in the complex plane $s = x+iy$, away from the real axis ($y \ne 0$). We discuss each in turn.

\subsection{Singularities at $s=0$} \label{sec:sing_s0}
Firstly we consider the presence of a singularity at $s=0$.
In axisymmetry, it is well established that $u_g \sim s \ln(s)$ as $s \to 0$, resulting in a $s^{-1}$ singularity in $\partial_s (u_g/s)$ \citep{Jault_95,Wu_Roberts_2015,Fearn_Proctor_87a}, reproduced in our example \eqref{eqn:axipol_u_g}.  However it has not been quite clear whether the logarithmic singularity pertains to a general asymmetric Taylor state: in particular, in axisymmetry $s=0$ is a singular line of the coordinate system, whereas in 3D spherical coordinates the only singular point is the origin $r=0$. 
\citet{robertsWu2014} showed that either by neglecting the boundary term (their (25a)) or considering Taylor's ODE directly, which we have shown to be of limited validity, (see their Appendix B) leads to a general logarithmic behaviour. 

At first inspection it appears that the boundary term is negligible as $s \to 0$. For a general 3D field, both $\bB$ and $\dot{\bB}$ are $\mathcal{O}(1)$ on $s=0$, suggesting that the interior term in \cref{eqn:alter1} is $\mathcal{O}(1)$, whereas the boundary term is $\mathcal{O}(s)$ as $s \rightarrow 0$. Motivated by the example in \cref{sec:Earth-like}, this suggests that a full treatment (including the boundary term) retains the singularity in 3D --- however, we do not find this to be the case. Significant cancellation in the interior term occurs and while the integrand is $O(1)$, the integral itself is $O(s)$, as expected since we know that the interior term and boundary term must sum to zero for all $s$. 
Therefore, there is no evidence that the 3D case has a logarithmic singularity at $s=0$, and indeed all our numerical solutions and analytic solutions are regular there. In the purely toroidal field explored in \cref{sec:Tay_right}, the analytic solution given in \cref{eqn:nosing} is purely polynomial, with no singular behaviour at the origin. This assertion can be strengthened into a theorem.

\begin{theorem} 
The assumption of a magnetic field that is regular initially and remains so for all time, places a restriction on the permitted behaviour of the geostrophic flow. 
In axisymmetry, the space of solutions allows a weak singularity in the geostrophic flow at $s=0$. However, in three dimensions it is required that the geostrophic flow is regular at the origin in order to maintain regularity of the magnetic field.

\end{theorem}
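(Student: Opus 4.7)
The plan is to analyse the contribution of the geostrophic flow $u_g(s)\phihat$ to the induction equation through the identity \eqref{eqn:3D_induct}, which splits $\curl(u_g\phihat\times\bB)$ into two pieces: an axisymmetric-type piece $s B_s\,\partial_s(u_g/s)\,\phihat$ and a genuinely three-dimensional piece $-(u_g/s)\,\partial_1\bB/\partial\phi$. The idea is to show that, under the hypothesis that $u_g\sim s\ln s$ near the axis (the natural axisymmetric singularity exhibited in \eqref{eqn:axipol_u_g}), the first piece remains bounded at $s=0$, whereas the second piece diverges logarithmically whenever the field depends non-trivially on $\phi$. Because $\partial_t\bB$ combines this contribution with the smooth ageostrophic and diffusive terms, a singular second piece would render $\partial_t\bB$ singular on the axis at $t=0$, contradicting the assumed regularity of $\bB$ for $t\ge 0$.

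First I would verify the bounded behaviour of the first piece: with $u_g\sim s\ln s$ one has $\partial_s(u_g/s)\sim 1/s$, but multiplication by $s$ gives an $O(1)$ coefficient, and $B_s$ is regular, so this term alone cannot obstruct regularity. In axisymmetry $\partial_1\bB/\partial\phi\equiv 0$, so the second piece is absent and the logarithmic singularity is permissible, reproducing the known behaviour of \citet{Jault_95} and \citet{Wu_Roberts_2015}; this establishes the axisymmetric half of the theorem. For the 3D half, the factor $u_g/s\sim\ln s$ in the second piece is unbounded, and unless $\partial_1\bB/\partial\phi$ degenerates in a carefully coordinated way as $s\to 0$, the product is logarithmically singular on the rotation axis.

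The main obstacle is to rule out the possibility that this apparent singularity is fictitious --- either cancelling against other contributions, or arising merely from the coordinate singularity of $\phihat$ at $s=0$. To deal with this I would pass to Cartesian components near the rotation axis and expand $\bB$ in a Fourier series in $\phi$, writing $\bB=\sum_m\bB_m(s,z,t)e^{im\phi}$. Regularity of $\bB$ on the axis imposes the standard parity/order conditions on each $\bB_m$ as $s\to 0$, under which $\partial_1\bB/\partial\phi\big|_{s=0}$ is controlled by the $m=\pm 1$ harmonics and is generically nonzero for any truly non-axisymmetric Taylor state. A $\ln s$ divergence multiplied by such a nonzero Cartesian limit cannot be absorbed into the first piece (which is bounded) nor into the smooth ageostrophic and diffusive contributions, and is not an artefact of $\phihat$ since the divergent quantity is a genuine Cartesian component of $\partial_t\bB$. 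Hence continued regularity of the magnetic field for infinitesimal $t>0$ forces $u_g/s$ to remain bounded at the origin, precluding the weak logarithmic singularity admitted by axisymmetry.
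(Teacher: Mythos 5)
Your proposal is correct and follows essentially the same route as the paper's proof: both rest on the form of the geostrophic term in the induction equation, contrasting \cref{eqn:simpinduct} (where $u_g$ enters only through $s\,\partial_s(u_g/s)$, which stays bounded for $u_g\sim s\ln s$) with the extra $-(u_g/s)\,\partial_1\bB/\partial\phi$ term in \cref{eqn:3D_induct}, whose $\ln s$ factor would make $\partial_t\bB$ singular for any genuinely non-axisymmetric field. Your additional check via Cartesian components and the azimuthal Fourier expansion, ruling out cancellation or a coordinate artefact, supplies rigour the paper's one-paragraph argument simply asserts, but it is a refinement of the same idea rather than a different proof.
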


\begin{proof}

This result directly follows from the form of the geostrophic term in the induction equation. In axisymmetry this is given by \cref{eqn:simpinduct}, from which it is clear that it is permissible for $u_g$ to contain a weak logarithmic singularity while maintaining a regular $\bB$. In 3D the geostrophic term in the induction equation is given by \cref{eqn:3D_induct}. In the presence of a non-axisymmetric  magnetic field, any logarithmic singularity in $u_g$ would render $\partial_t {\bB}$ non-regular. 
Hence the assumption of regular $\bB(t)$ is incompatible with such a singular solution. 

\end{proof}



While the analytic approximation in \cref{sec:Earth-like} is shown to produce accurate geostrophic flows for Earth-like magnetic fields, it should be used with caution, since the analytic structure of the solution will contain an $s \ln s$ dependence, that does not persist when the full balance including the boundary term is considered.
For axisymmetric magnetic fields this weak logarithmic singularity is not a significant concern since the geostrophic flow only enters the induction equation through $\partial_s (u_g/s)$ and so the magnetic field remains regular everywhere. By contrast, in 3D the structure of the  geostrophic term in the induction equation (given in \cref{eqn:3D_induct}) means that the logarithmic singularity is imparted to the magnetic field itself, causing the magnetic field to diverge at the rotation axis and violating the standard assumption of a regular field. Thus, in a practical implementation, such singular behaviour must be filtered out of $u_g$.

\subsection{Singularities at $s=1$}
We also address the possible existence of a singularity at $s=1$. For the specific case of an axisymmetric dipolar magnetic field, \citet{robertsWu2014} presented an argument that $\partial_s u_g \sim (1-s^2)^{-1/2}$, although they conceded that this was not supported by their numerical examples. The same form of singular behaviour for $u_g$ has been predicted for torsional waves \citep{Schaeffer_2012,Marti_Jackson_2016}, perturbations to Taylor states, whose eventual steady state at $t=\infty$ would be exactly magnetostrophic (if indeed steady Taylor states exist). However, there is no reason why the analytic structure of the oscillations should mirror that of the underlying background state, particularly as the manner of how the limit $t \to \infty$ is reached at the end points where the wave speed may vanish is unclear \citep{Li_etal_2018}.  

Although we are not in a position to prove one way or the other the existence of singular behaviour at $s=1$, we demonstrate by example that it is not generally present.

We find no singularity at $s=1$ in the non-axisymmetric example of \cref{sec:3D_examples}. A similar regular behaviour is shown in \cref{fig:Tay_dugsds_sing_s1s} (red curve) for an axisymmetric example. Interestingly, for this latter case, the application of Taylor's ODE (which is invalid for this example) gives a solution that does show a singularity at $s=1$ (blue curve).
In this instance, singular behaviour is simply an artifact of applying Taylor's ODE when it is not valid, and we have found no cases where a solution to our more general analysis behaves singularly at $s=1$. 

This observation may help explain why the prediction of a singularity at $s=1$ \citep{robertsWu2014} is not borne out in any numerical examples. They themselves discussed this discrepancy and hypothesized that a key issue is the lack of boundary information contained within Taylor's equation. We speculate that should their magnetic field satisfy not only Taylor's constraint and the boundary conditions but also crucially the first order boundary conditions, that this singular behaviour will vanish and the geostrophic flow will remain regular at $s=1$. 
We note however that certain magnetic forcing terms can render the geostrophic flow singular at $s=1$: for example, that of a non-polynomial mean-field $\alpha$-effect described in Appendix F of \citet{Li_etal_2018}.

Finally, we remark that for a dipolar axisymmetric Taylor state,  \citet{Li_etal_2018} 
showed evidence of non-singular but abrupt boundary-layer like behaviour close to $s=1$, possibly because the equation describing the geostrophic flow is null at the equator (i.e. $\alpha = S = 0$). A similar result was also found by \citet{Fearn_Proctor_87a} who abandoned constraining their geostrophic flows near $s=1$ due to anomalous behaviour. We note, however, in our analytical solutions, we find no evidence of such behaviour: for example \cref{fig:Example1_axisymmetricpoloidal} shows a smooth solution at $s=1$.


\begin{figure} 
	\centering
	\begin{subfigure}{.49\textwidth}
		\centering
\includegraphics[width=\textwidth]{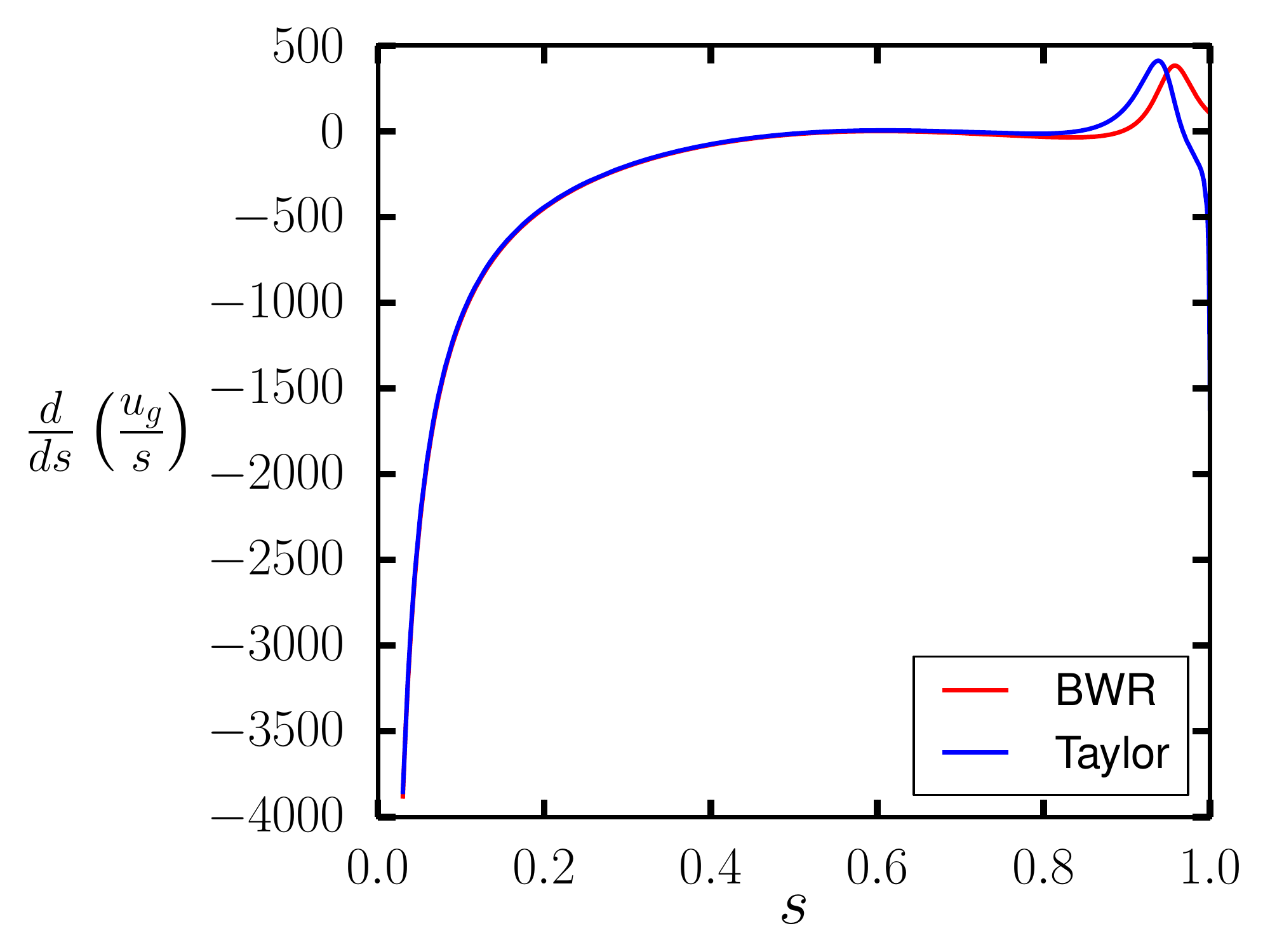}
\caption{}\label{fig:Tay_dugsds_sing_s1}
	\end{subfigure}%
	\begin{subfigure}{.49\textwidth}
		\includegraphics[width=\textwidth]{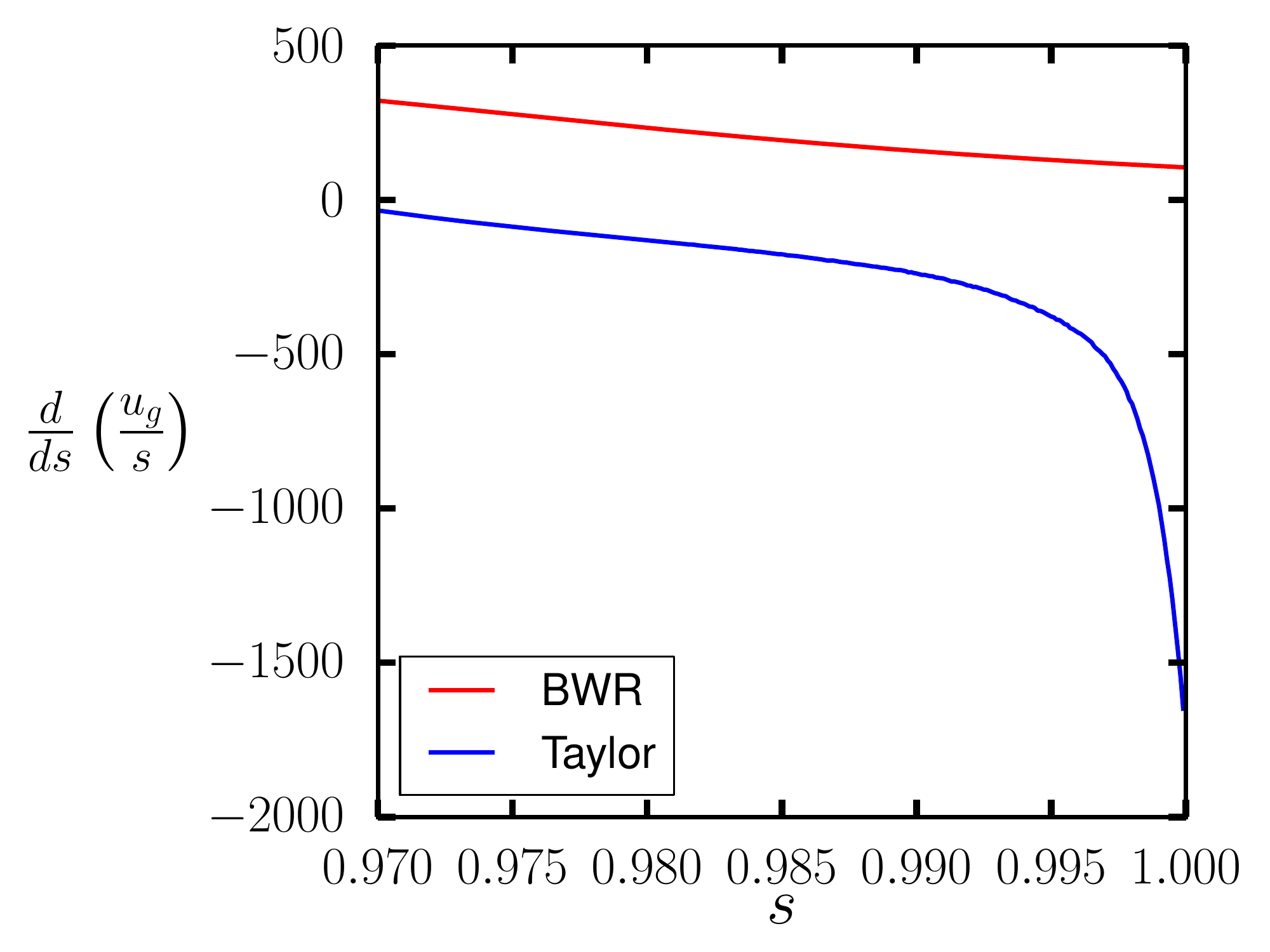}
\caption{}\label{fig:Tay_dugsds_sing_s1_zoom}
	\end{subfigure}
    \caption{A plot of $\partial_s (u_g/s)$ for solutions to a mixed axisymmetric Taylor state consisting of the poloidal field of the example of \cref{sec:Tayfail} with a $l=1$, $m=0$, $n=1$ toroidal Galerkin mode, using the BWR and Taylor equations. (a) Shows the whole domain, a singularity of the form $s^{-1}$ is visible for both solutions at $s=0$ and for Taylor's solution only, a weaker singularity also occurs at $s=1$. (b) Zoomed-in plot of the $s=1$ singularity to show clearly that it only occurs when solving Taylor's equation; it has the form $(1-s^2)^{-\frac{1}{2}}$.}\label{fig:Tay_dugsds_sing_s1s}
\end{figure}

\subsection{Singularities off the $s$-axis}
Finally, inspecting an example solution \eqref{eqn:axipol_u_g} shows that there can be either branch cuts or logarithmic singularities away from the real line. These do not affect the solution itself (defined on the real interval $0\le s \le 1$) but can influence convergence of the numerical method used to find $u_g$ \citep{Book_Boyd_2001}. The closer the singularities lie to the real interval $[0,1]$ the slower the convergence. In general, we speculate that such singularities can lie arbitrarily close to the real line, possibly being associated with the breakdown of the magnetostrophic balance, for example, torsional waves etc.

\section{Discussion}

In this paper we have discussed in some detail how the geostrophic flow, a fundamental part of any magnetostrophic dynamo, might be determined. Of particular note is that we have shown why the method introduced by \citet{Taylor_63} fails in most cases, because of its intrinsic (and, to date, unrecognised) assumption that the initial magnetic field structure must satisfy a higher-order boundary condition (that is, both the magnetic field and its time derivative must satisfy matching conditions pertaining to an exterior electrical insulator). We presented a generalised version of Taylor's method valid for an arbitrary initial magnetic Taylor state that is not subject to higher order boundary conditions. In many of our examples, the magnetic fields of dimensional scale 1.7~mT drive flows of magnitude about $10^{-4}\text{ms}^{-1}$, comparable to large-scale flows inferred for the core \citep{Holme_2015}. Thus, in concert with weakly-viscous models, inviscid models also produce Earth-like solutions.

A broader point of note is the extent to which the restriction on the validity of Taylor's approach impacts the related derivation of the equation describing torsional waves \citep{Roberts_Aurnou_2011}. A general treatment of torsional waves includes boundary terms, whose proper evaluation would require a method such as described in \citet{Book_Jault_2003}. However, the troublesome boundary terms are usually neglected, either because of axisymmetry or because of arguments based on the relative size of the asymmetric magnetic field \citep{Roberts_King_2013}. Either way, these approaches remain unconstrained by any consideration of higher order boundary conditions on the magnetic field and the theoretical description remains correct. However, in \cref{sec:sing_s0} we describe the danger of neglecting the boundary term, this leading to a logarithmic singularity not present in solutions of the full equation. This has potential implications for analysis of torsional waves, for which the avoidance of a logarithmic singularity may require the full boundary term.


It is worth noting that the weak logarithmic singularity $u_g \sim s \ln(s)$ as $s \to 0$ in axisymmetric magnetostrophic models stands in contrast with weakly viscous models which are anticipated to be regular everywhere. For example, the asymptotic structure is $u_g=O(s)$ in axisymmetry for both no-slip and stress-free boundary conditions (using the formulae summarised in equations 8 \& 9 of \citet{Livermore_etal_2016a} and the fact that $B_s, B_\phi \sim s$ as $s\to 0$) and $u_g=O(1)$ in non-axisymmetry (using the formulae in equation (33) of \cite{Hollerbach_96a} and the fact that $([\curl \bB] \times \bB)_\phi \sim s$ through the properties of general vectors described by \cite{Lewis_Bellan_90}). 
The presence of a weak logarithmic singularity is therefore a feature unique to the axisymmetric inviscid case, and serves to distinguish the exact magnetostrophic balance (with zero viscosity) from models with arbitrarily small but non-zero viscosity.
However, in 3D there is no such distinction between the structure of $u_g$ between $E=0$ and $E \ll 1$: in both cases $u_g$ is regular. 


Given that the geometry of the outer core of the Earth is a spherical shell rather than a full sphere, a natural question to ask is how would we calculate the flow within this domain. The method for determining the ageostrophic flow would remain comparable although it could be discontinuous or singular across the tangent cylinder $\cal C$, the geostrophic cylinder tangent to the solid inner core \citep{Livermore_Hollerbach_2012}. As for the geostrophic flow, in the absence of viscosity, there is no reason why it must be continuous across $\cal C$; there are no known matching conditions that it must satisfy and such an analysis lies far beyond the scope of this work.

Although supplying an analytic structure of the evolving magnetostrophic flow, an instantaneous determination of the geostrophic component is not itself of practical use within a numerical method using finite timesteps of size $h$, as the solution will immediately diverge from the solution manifold \citep{Livermore_etal_2011}. However, as for the  axisymmetric-specific method of \citet{Wu_Roberts_2015}, our 3D instantaneous methods generalise simply to schemes that are accurate to first order in $h$, thus presenting a viable method for numerically evolving a 3D magnetostrophic dynamo. A direct comparison of this method with the fully implicit (3D) method of \citet{Li_etal_2018} would be an interesting study. Indeed, our 3D first-order-accurate solutions could be used as a starting guess for their nonlinear iterative scheme, enabling much larger timesteps to be taken for which the geostrophic flow does not need to be close to its structure at the previous step.

Lastly, there is mounting evidence that rapid dynamics within the core is governed by quasi-geostrophic (QG) dynamics, in which the flow is quasi invariant along the axis of rotation \citep{pais2008quasi, Gillet_etal_2011}. We briefly comment on whether the slowly evolving background magnetostrophic state is also likely to show such a structure. 
Both \citet{Li_etal_2018} and \citet{Wu_Roberts_2015} show axisymmetric magnetostrophic solutions that have largely $z$-invariant zonal flows.
Here, in our 3D cases we also find that the geostrophic flow is comparable in magnitude to the ageostrophic zonal flow. In our Earth-like example, comparing figures \ref{fig:Exampleearthlike} and \cref{fig:contour_u_phi}(b), the maximum value of the geostrophic flow is about one quarter of that of the total zonal flow. 
Furthermore, our (large-scale) magnetostrophic solutions contain a significant $z$-invariant component, a finding that is consistent with
\cite{aurnou2017cross} who have recently suggested the existence of a threshold lengthscale, below which the geodynamo is magnetostrophic and above which the dynamics are QG. 

\section{Acknowledgements}

This work was supported by the Engineering and Physical Sciences Research Council (EPSRC) Centre for Doctoral Training in Fluid Dynamics at the University of Leeds under Grant No. EP/L01615X/1. The authors would also like to thank Andrew Jackson and Dominique Jault for helpful discussions, as well as Chris Jones, Rainer Hollerbach and David Gubbins  for useful comments.

\appendix

\section{Further details of numerical methods}
\subsection{A Galerkin representation} \label{sec:Apb}
A simple way of constructing magnetic states is to take combinations of single-mode toroidal or poloidal vectors, whose scalars are each defined in terms of a single spherical harmonic:
$$\vec{B} = \sum_{l,m,n} a_{l,n}^m {\vec{\mathcal{T}} }_{l,n}^m + b_{l,n}^m {\vec{\mathcal{S}}}_{l,n}^m  $$
where
$\vec{\mathcal{T}}_{l,n}^m=\curl (\chi_{l,n}(r) Y_l^m \rhat)$ and $\vec{\mathcal{S}}_{l,n}^m=\curl \curl (\psi_{l,n}(r) Y_l^m \rhat)$ and the harmonics are fully normalised over solid angle:
$$ \oint \big[ Y_l^m \big]^2 d\Omega = 1. $$
We choose the scalar functions $\chi_{l,n}$ and $\psi_{l,n}$, $n\ge 1$, to be of polynomial form \citep{li2010optimal,Li_etal_2011}, and defined in terms of Jacobi polynomials $P_n^{(\alpha,\beta)}(x)$, by

\begin{align} 
\chi_{l,n}&=r^{l+1}(1-r^2)P_{n-1}^{(2,l+1/2)}(2r^2-1) \nonumber \\
\psi_{l,n} &= r^{l+1} \bigg( c_0 P_n^{(0,l+1/2)}(2r^2-1) + c_1 P_{n-1}^{(0,l+1/2)}(2r^2-1)+c_2 \bigg) 
\end{align}
where
\begin{align}
c_0 &= -2n^2(l+1)-n(l+1)(2l-1)-l(2l-1) \nonumber \\
c_1 &= 2(l+1)n^2+(2l+3)(l+1)n+(2l+1)^2 \nonumber \\
c_2 &= 4nl+l(2l+1).
\end{align}
Suitably normalised, the vector modes then satisfy
(A) the boundary conditions of equation \eqref{eqn:bc}; (B) regularity at the origin and (C) $L_2$ orthonormality of the form
\begin{align} & \int_V {\vec{\mathcal{S}}}_{l,n}^m \cdot {\vec{\mathcal{S}}}_{l',n'}^{m'} \, dV = \int_V {\vec{\mathcal{T}}}_{l,n}^m \cdot {\vec{\mathcal{T}}}_{l',n'}^{m'} \, dV = \delta_{l,l'}\, \delta_{m,m'}\, \delta_{n,n'} \nonumber \\
& \int_V {\vec{\mathcal{S}}}_{l,n}^m \cdot {\vec{\mathcal{T}}}_{l',n'}^{m'} dV = 0, \label{eqn:Ap_orthonormal} \end{align}
where all integrals are over the spherical volume $V$. These conditions reduce to the equations (when $l = l'$, $m = m'$)
$$l(l+1) \int_0^1\frac{l(l+1)}{r^2}\psi_n \psi_{n'}+\pd{\psi_n}{r}\pd{\psi_{n'
}}{r} ~dr=\delta_{n,n'}, ~~ \text{and} ~~ l(l+1) \int_0^1\chi_n \chi_{n'} ~dr=\delta_{n,n'}.$$ 
For the velocity field, the ageostrophic flow satisfies only the impenetrable condition $u_r = 0$ on $r=1$, which constrains only the poloidal representation. A modal set that satisfies this boundary condition, regularity at the origin and $L_2$ orthonormality is given by \citet{Li_etal_2018}
$$ \vec{u} = \sum_{l,m,n} c_{l,n}^m {\bf t}_{l,n}^m + d_{l,n}^m {\bf s}_{l,n}^m  $$
where
$\vec{t}_{l,n}^m=\curl (\omega_{l,n}^m(r) Y_l^m \rhat)$ and $\vec{s}_{l,n}^m=\curl \curl (\xi_{l,n}^m(r) Y_l^m \rhat)$. The radial functions are given by 
\begin{align} \xi_{l,n}^m &= r^{l+1}(1-r^2)P_{n-1}^{(1,l+1/2)}(2r^2-1)  \nonumber \\
\omega_{l,n}^m &=r^{l+1}P_{n-1}^{(0,l+1/2)}(2r^2-1)\end{align}
for $n \ge 1$.

\subsection{Projection} \label{sec:projmeth}
In section \ref{sec:protheo} we need to project a divergence-free magnetic field $\bf B$ onto the magnetic Galerkin basis up to a truncation $L_{max}$ in spherical harmonic degree and $N_{max}$ in radial index:
$$ \proj{\bf B} = \sum_{l=1}^{L_{max}} \sum_{m=-l}^{l} \sum_{n=1}^{N_{max}} a_{l,n}^m {\bf T} + b_{l,n}^m {\bf S}_{l,n}^m.$$
Determination of the coefficients $a_{l,n}^m$ and $b_{l,n}^m$ can either be
accomplished through use of the 3D integral \eqref{eqn:Ap_orthonormal} directly, or equivalently by first taking the transform in solid angle to find the toroidal and poloidal parts of $\bf B$ 
\begin{equation} \label{eqn:tor}
T_l^m(r) = \frac{r^2}{l(l+1)} \oint (\curl \vec{B}_l^m)_r \,Y_l^m(\theta,\phi) ~ d\Omega, \qquad
S_l^m(r) = \frac{r^2}{l(l+1)} \oint (B_l^m)_r \,Y_l^m(\theta,\phi) ~ d\Omega,
\end{equation}
where $d\Omega = \sin\theta d\theta d\phi$, and secondly integrating in radius to give
$$a_{l,n}^m =\int_0^1 T_{l}^m \chi_{l,n} ~ dr ,\qquad b_{l,n}^m = \int_0^1\frac{l(l+1)}{r^2}S_{l,n}^m \psi_{l,n}+\pd{S_{l}^m}{r}\pd{\psi_{l,n}}{r} ~dr. $$

\subsection{Computation of the ageostrophic flow} \label{sec:u_a_method}


For a magnetic field $\vec{B}$ which is an exact Taylor state we can solve the magnetostrophic equation 
\begin{equation}\vec{\hat z} \times \vec{u} = -\grad p + \curl \vec{B} \times \vec{B}, \label{eqn:magneto_app}
\end{equation}
to determine the ageostrophic part of the fluid velocity $\vec{u}_{a}$. We note that the geostrophic flow is unconstrained by this equation as
\[ \vec{\hat z} \times u_g(s) \phihat= -u_g(s) \shat =  -\bn \int u_g(s)\,ds, \]
and so, as it can be written as a gradient, it can be absorbed into the pressure term.

The procedure then consists of taking the curl of \cref{eqn:magneto_app} to remove the pressure dependence and then proposing a trial form of the fluid velocity $\bu$ in terms of modes with unknown coefficients. 
Because $\vec{\hat z}$ is a constant vector and $\bB$ is based on Galerkin modes of polynomial form of known maximum degree, the modal representation for the flow then also has a known maximum degree. 
The unknown coefficients are found by equating powers of $r$ and solving the resulting system analytically with the assistance of computer algebra (e.g. Maple). 

It is worth noting that the solution $\bu$ above is determined only up to an arbitrary geostrophic flow. We remove the cylindrically-averaged azimuthal component of $\bu$, which results in the ageostrophic flow $\vec{u}_{a}$ with no geostrophic component. This also means that the geostrophic flow, determined through the methods described in the main text, is uniquely defined.

\bibliography{allrefs.bib}

\bibliographystyle{plainnat}

\end{document}